\def\BPhi{\boldsymbol{\Phi}}
\def\CCC{{\mathbb{C}}}
\def\NN{{\mathbb{N}}}
\def\MM{{\mathbb{M}}}
\def\RR{{\mathbb{R}}}
\def\CF{{\cal F}}
\def\CH{{\cal H}}
\def\CL{{\cal L}}
\def\CS{{\cal S}}
\def\CZ{{\cal Z}}
\def\CZ{{\cal Z}}
\def\CV{{\cal V}}
\def\rem{\textnormal{rem}}
\def\sgn{\textnormal{sgn}}
\def\snull{\textnormal{ }}
\def\epv {{$\mbox{}$\hfill ${\Box}$\vspace*{1.5ex} }}
\newcommand{\ra}{\rightarrow}
\newcommand{\Ra}{\Rightarrow}
\newcommand{\Lra}{\Leftrightarrow}
\def\ov{\overline}
\def\epv {{$\mbox{}$\hfill ${\Box}$\vspace*{1.5ex} }}
\newtheorem{thm}{Theorem}[section]
\newtheorem{cor}[thm]{Corollary}
\newtheorem{prop}[thm]{Proposition}
\theoremstyle{definition}
\newtheorem{df}[thm]{Definition}
\newtheorem{alg}[thm]{Algorithm}
\newtheorem{ex}[thm]{Example}
\newtheorem{remark}[thm]{Remark}
\title{\textmd{{\Large On one-parameter families of hermiticity-preserving superoperators which are not positive}}}
\date{}
\begin{document}
\maketitle
\begin{center}
Grzegorz Pastuszak$^{a}$\footnote{Corresponding author}, Alicja Jaworska-Pastuszak$^{a}$, Takeo Kamizawa$^{b}$ and Andrzej Jamio{\l}kowski$^{c}$\\\textnormal{ }\\
\footnotesize{$^{a}$ Faculty of Mathematics and Computer Science, Nicolaus Copernicus University, Toru\'n, Poland, past@mat.umk.pl (Grzegorz Pastuszak),
jaworska@mat.umk.pl (Alicja Jaworska-Pastuszak); $^{b}$ Department of Informatics and Data Science, Sanyo-Onoda City University, Sanyo-Onoda, Japan,
res.kamizawa@gmail.com; $^{c}$ Faculty of Physics, Astronomy and Informatics, Nicolaus Copernicus University, Toru\'n, Poland, jam@fizyka.umk.pl}
\end{center}

\begin{abstract} A \emph{one-parameter family of hermiticity-preserving superoperators} is a time-dependent family $\{\Phi_{t}\colon\MM_{n}(\CCC)\ra\MM_{n}(\CCC)\}_{t\in\RR}$ of hermiticity-preserving superoperators determined, in a certain sense, by real
and complex polynomial functions in the variable $t\in\RR$. The paper studies sufficient computable criteria for nonpositivity of superoperators in one-parameter families. More precisely, we give sufficient conditions for the following assertions to hold: $(1)$ every $\Phi_{t}$ is not positive, $(2)$ $\Phi_{t}$ is not positive for $t$ in some open interval $(u,v)\subseteq\RR$ and $(3)$ there is some $\Phi_{t}$ which is not positive. We show that in some situations $(3)$ implies $(2)$. Our approach to the problem is based on the \emph{Descartes rule of signs} and the \emph{Sturm-Tarski theorem}. In order to apply these facts, we introduce the \emph{sign variation formulas}. These formulas are first order logical formulas in one free variable $t$, generalising sign sequences of polynomials used in Descartes rule of signs.
\end{abstract}

\emph{Keywords:} Hermitian superoperators; one-parameter families of superoperators; nonpositive maps; Sturm-Tarski theorem; Descartes rule of signs; Choi-Jamio{\l}kowski isomorphism; effective methods in quantum information theory.
{\footnotesize}

\section{Introduction and the main results}

In the algebraic approach to quantum mechanics, a quantum mechanical system is represented by a special linear map $\Phi$ defined on the $\CCC$-vector space $\MM_{n}(\CCC)$ of all $n\times n$ complex matrices. Such maps are called \textit{superoperators}. It is well known that a superoperator $\Phi:\MM_{n}(\CCC)\ra\MM_{n}(\CCC)$ has an \emph{operator-sum representation} given by $$\Phi(X)=\sum_{r=1}^{s}A_{r}XB_{r}$$ with fixed complex matrices $A_{r},B_{r}\in\MM_{n}(\CCC)$, for any $r=1,\hdots,s$. In quantum mechanics, the observables of the system are identified with \emph{hermitian operators}, that is, complex selfadjoint matrices. This implies that we are interested in superoperators which \emph{preserve hermiticity}. A superoperator $\Phi\in\CL(\MM_{n}(\CCC))$ preserves hermiticity if and only if it is given by the formula $$\Phi(X)=\sum_{r=1}^{s}\alpha_{r}A_{r}XA_{r}^{*}$$ where $\alpha_{r}$ are non-zero real numbers ($A^*$ denotes the matrix adjoint to $A$). Recall that if all $\alpha_{r}$ are positive, then $\Phi$ is \emph{completely positive}. In physical terminology, these maps are often called \textit{quantum operations} or \textit{quantum channels} and play prominent role in the theory of quantum measurements and evolution of open quantum systems. We refer the reader to \cite{BZ,HZ,NC} for more details on quantum information theory.

It is in general too restrictive to consider only completely positive maps. We refer the reader to \cite{Al,Pe1,Pe2,CM} for a famous discussion on this topic and also to \cite{JWW} for more recent considerations supporting this point of view. We would like to emphasize that in the fundamental papers \cite{SMJ,JS} the authors do not assume complete positivity for quantum systems, but rather a general dynamical framework in terms of linear maps of \emph{density operators}. Density operators represent physical states and in algebraic terms they are semipositive matrices with unital traces. Furthermore, evolution of the system is described by maps on the set of all density operators. It follows that we should be interested not only in superoperators which preserve hermiticity, but also in those that are \emph{positive}. Recall that positive maps send, by definition, semipositive operators to semipositive ones.\footnote{Completely positive maps are also positive.}

Since general form of positive maps is not known, it is then crucial to find \emph{effective methods} for checking if a given hermiticity-preserving superoperator is positive or not. This problem has already attracted much attention, see for example \cite{Ja1,Ja2,Ja3,Ch,SkZ,PSJ1,PSJ2}. In particular, in \cite{PSJ1,PSJ2}\footnote{The article \cite{PSJ2} is a letter version of \cite{PSJ1}, addressed to the physical science community.} the first and the last author of the present paper give an algorithm determining whether a hermiticity-preserving superoperator is positive. The algorithm \cite[5.7]{PSJ1} is based on logical techniques of \emph{quantifier elimination theory of real numbers} \cite{Ta,Mi,Ar,Re3}. Even though it is fully deterministic, its enormous complexity \cite[Section 6]{PSJ1} makes it hard to use in practice (we refer to \cite{BKR,BSS,BPR} for more information about complexity of quantifier elimination). Moreover, the approach taken in \cite{PSJ1} does not consider time-dependency of a quantum system.


Indeed, in quantum mechanics we are often interested in \emph{time-dependent evolution} of quantum dynamical systems. Such evolution can often be described as a collection $$\BPhi=\{\Phi_{t}:\MM_{n}(\RR)\ra\MM_{n}(\RR)\}_{t\in\RR}$$ of superoperators, indexed by a real parameter $t$. If the system is isolated, the time evolution generates a one-parameter
group $\BPhi=\{\Phi_{t}\}_{t\in\RR}$ of unitary operators where $\Phi_{t}=e^{-iHt}$ for a given time-independent Hamiltonian $H$. On the other hand, if the system $\BPhi$ is open, thus interacts with environment, its evolution is given by  $$\Phi_{t}(X)=\sum_{r=1}^{s}A_{r}(t)XA_{r}(t)^{*},$$ for some complex time-dependent matrices $A_{r}(t)$ such that $\sum_{r=1}^{s}A_{r}(t)^{*}A_{r}(t)$ equals the identity matrix, for any parameter $t\in\RR$. Note that $\Phi_{t}$ are still completely positive in this case, although of more complicated form than $e^{-iHt}$.

Based on the discussion above we are convinced to study a general situation in which $\Phi_{t}$ are only assumed to preserve hermiticity, and so may or may not be positive. It follows that $\Phi_{t}$ takes the form: $$\Phi_{t}(X)=\sum_{r=1}^{s}\alpha_{r}(t)A_{r}(t)XA_{r}(t)^{*}$$ where $\alpha_{r}(t)$ are non-zero real numbers, for any parameter $t\in\RR$. We call such systems \emph{general one-parameter families of hermiticity-preserving superoperators}. This paper initiates a systematic study of the problem of positivity and nonpositivity of superoperators in these families. In a series of articles, we aim to give \emph{computable criteria} for determining which superoperators in one-parameter families are positive or nonpositive. By computable criterion (or \emph{effective condition}), we mean
a procedure employing only finite number of arithmetic operations. In other words, a computable criterion is an \emph{algorithm} employing arithmetic operations on real or complex numbers. We refer the reader to \cite{JP1,JP2,Pa} for important examples of such criteria in the context of quantum channels.

In the present paper we focus on general one-parameter families of a special form. Namely, we assume that the functions $\alpha_{r}$ are real polynomials in variable $t$ (without real roots) and the matrices $A_{r}$ have complex polynomial functions as coefficients, see Definition 4.1 for the precise statement. We call these families \textit{one-parameter families of hermiticity-preserving superoperators} (without the adjective \textit{general}). The aim of the paper is to give sufficient computable criteria for existence of nonpositive superoperators in one-parameter families. Here we are concerned with three main cases: $(1)$ every $\Phi_{t}$ is not positive, $(2)$ $\Phi_{t}$ is not positive for $t$ in some open interval $(u,v)\subseteq\RR$ and $(3)$ there is some $\Phi_{t}$ which is not positive. A one-parameter family is called \emph{globally nonpositive} in the first case, \emph{locally nonpositive} in the case of $(2)$ and \emph{pointwise nonpositive} in the last case, see Definition 4.2.

Our methods are based on two constructive mathematical tools: the \emph{Descartes rule of signs} and the \emph{Sturm-Tarski theorem}. The Descartes rule of signs (see Theorem 3.1) relates the number of positive real roots of a real univariate polynomial with the number of \emph{sign variations} in the sequence of signs associated with coefficients of the polynomial. This sequence is called the \textit{sign sequence} of a univariate polynomial. The Sturm-Tarski theorem (see Theorem 6.2) allows us to calculate the number of distinct real roots of real polynomials under some additional conditions (see Theorem 6.3 and Proposition 6.5). In \cite{Ja3} the author applies the Descartes rule to give a handy criterion for nonpositivity of a fixed hermiticity-preserving superoperator (recalled in Corollary 3.3). Here we adapt the results from \cite{Ja3} to the case of one-parameter families. For this purpose, we introduce the \textit{sign variation formulas} which are generalizations of sign sequences for bivariate polynomials. It turns out that global, local and pointwise nonpositivity of a one-parameter family can be checked by looking at conditions of the form $\forall_{t\in\RR\snull}\varphi(t)$, $\exists_{t\in(u,v)\snull}\varphi(t)$ and $\exists_{t\in\RR\snull}\varphi(t)$, respectively, where $\varphi(t)$ is a sign variation formula of a special form (see Theorem 5.5). Using the Sturm-Tarski theorem we prove that these conditions are computable. This fact is shown in Theorem 6.7 which is the first main result of the paper. The second main result is Theorem 7.1, presenting the computable criterion for nonpositivity of one-parameter families.

The paper is organized as follows. Sections 2 and 3 are introductory. In Section 2 we recall basic information about superoperators, especially the Choi-Jamio{\l}kowski isomorphism which plays important role in our considerations. In Proposition 2.4 we recall the known fact that a positive hermiticity-preserving superoperator $\Phi\in\CL(\MM_{n}(\CCC))$ has number of negative eigenvalues bounded by $(n-1)^2$. This fact is fundamental for the criterion of \cite{Ja3} and serves as a basis for our results as well. Section 3 is devoted to Descartes rule of signs. In particular, we recall in Corollary 3.3 the main result of \cite{Ja3}. In Section 4 we introduce the one-parameter families of hermiticity-preserving superoperators, see Definition 4.1. We are interested in determining whether a given one-parameter family is globally, locally or pointwise nonpositive (Definition 4.2). The main result of the section is Proposition 4.3 which shows that the set of all characteristic polynomials of a one-parameter family may be viewed as a bivariate real polynomial in $\RR[t,x]$. This result is a consequence of the fact that one-parameter families are determined by polynomials, in a suitable sense. However, we believe that they may be nevertheless good approximations of more general situations (see Remark 4.5). Section 5 introduces the sign variation formulas. We show in Theorem 5.5 how these formulas can be applied in the
verification of global, local and pointwise nonpositivity of one-parameter families. Moreover, Theorem 5.7 states that local and pointwise nonpositivity are equivalent conditions, if considered in terms of sign variation formulas. Sections 6 and 7 are the core of the paper. Indeed, the purpose of Section 6 is to give computable criteria for conditions $\exists_{t\in\RR\snull}\varphi(t)$ and $\forall_{t\in\RR\snull}\varphi(t)$ to hold, see Theorem 6.3 and Corollary 6.6, respectively. Both facts are merged for convenience in Theorem 7.1. These criteria are based on Sturm-Tarski theorem and hence a part of the section is devoted to recalling some related facts and definitions. Section 7 is the final one. In this section we present the main computable criterion for the global, local and pointwise nonpositivity of a one-parameter family. The criterion is given in Theorem 7.1 which is also a basis for two algorithms (Algorithms 7.2 and 7.3). The section ends with an application of Algorithm 7.2.

\section{Hermiticity-preserving superoperators}

This section recalls some basic facts (and fixes related notation) regarding \emph{superoperators}. In particular, we recall in Theorem 2.1 the
\emph{Choi-Jamio{\l}kowski isomorphism} which relates positivity of a superoperator $\Phi$ on $\MM_{n}(\CCC)$ with block positivity of the corresponding linear map $\mathcal{J}(\Phi)$ on $\CCC^{n}\otimes\CCC^{n}$. In Proposition 2.2 we calculate the matrix form $M(\mathcal{J}(\Phi))$ of $\mathcal{J}(\Phi)$. We finish the section with Proposition 2.4, relating the number of negative eigenvalues of $M(\mathcal{J}(\Phi))$ with nonpositivity of hermiticity-preserving superoperator $\Phi$.

In the paper we consider only finite dimensional vector spaces over the field $\CCC$ of complex numbers. If $V$ is such a vector space, then
$\mathcal{L}(V)$ denotes the $\mathbb{C}$-vector space of all linear maps $T\colon V\to V$. Recall that if the dimension of $\mathcal{L}(V)$ equals $n$, then $\mathcal{L}(V)$ is isomorphic with the space $\MM_{n}(\CCC)$ of all $n\times n$ complex matrices.\footnote{The same holds if $V$ is a Hilbert space of dimension $n$.} We do not distinguish between elements of $\mathcal{L}(V)$ and $\MM_{n}(\CCC)$. We usually consider vector spaces such as $\CCC^{n},\CCC^{n}\otimes\CCC^{n},\MM_{n}(\CCC)$. It is useful to recall
that the isomorphism $\CL(\CCC^{n})\cong\MM_{n}(\CCC)$ of $\mathbb{C}$-vector spaces yields
$$\CL(\CCC^{n}\otimes\CCC^{n})\cong\CL(\CCC^{n})\otimes\CL(\CCC^{n})\cong\MM_{n}(\CCC)\otimes\MM_{n}(\CCC).$$ An element of $\CL(\MM_{n}(\CCC))$ is called a \textit{superoperator}.

We denote by $e_1,\dots,e_n$ the elements of the standard $\CCC$-basis of $\CCC^{n}$. The simple tensors $e_i\otimes e_j$, for $i,j=1,\dots,n$, form
the standard $\CCC$-basis of the vector space $\CCC^n\otimes\CCC^n$ and are denoted by $\epsilon_{ij}$. The $n\times n$ complex matrices
$E_{ij}=[e_{kl}]$ such that $e_{kl}\in\{0,1\}$ and $e_{kl}=1 \Lra (k,l)=(i,j)$ form the standard $\CCC$-basis of $\MM_{n}(\CCC)$.

It is well known that $\mathbb{C}^n$ is a Hilbert space with respect to the standard inner product $\langle\cdot\!\mid\!\cdot\rangle$ such that for any
$x=(x_{1},\dots,x_{n}),y=(y_{1},\dots,y_{n})\in\CCC^{n}$ we have $$\langle  x\!\mid\!  y \rangle=\sum_{i=1}^{n}\ov{x_{i}}y_{i}$$ where $\ov{x_{i}}$
denotes the complex conjugate of $x_{i}$. The space $\CCC^{n}\otimes\CCC^{n}$ is also a Hilbert space with respect to the inner product defined as
$$\langle x\otimes y\!\mid\! x'\otimes y'\rangle=\langle x\!\mid\! x'\rangle\cdot\langle y\!\mid\! y'\rangle,$$ for any $x,x',y,y'\in\CCC^{n}$.

Assume that $\CH$ is a finite dimensional Hilbert space with the inner product $\langle \cdot\!\mid\!\cdot\rangle$. If $T\in\mathcal{L}(\CH)$, then there
exists a unique \textit{adjoint operator} $T^*\!\in\mathcal{L}(\CH)$ such that, for any
$x,y\in\CH$, the equality $\langle Tx\!\mid\! y \rangle=\langle x\!\mid\! T^*y\rangle$ holds. Note that if $T\in\MM_{n}(\CCC)$, then $T^*$ is the conjugate transpose of $T$, that is, the matrix \textit{adjoint} to $T$.

An operator $T\in\CL(\CH)$ is \emph{selfadjoint} (or \emph{hermitian}) if and only if $T=T^{*}$ which is in turn equivalent to the condition $\langle x\!\mid\! Tx\rangle\in\mathbb{R}$, for any $x\in\CH$. A selfadjoint operator $T\in\CL(\CH)$ is \textit{semipositive} if and only if $\langle x\!\mid\! Tx\rangle\geq 0$, for any $x\in\CH$.

Assume that $\Phi\in\CL(\MM_{n}(\CCC))$ is a superoperator. We say that $\Phi$ \textit{preserves hermiticity} if and only if $\Phi(X)$ is hermitian, whenever $X\in\MM_{n}(\CCC)$ is hermitian. It is well known that this condition holds if $\Phi$ is given by the following formula:
$$\Phi(X)=\sum_{r=1}^{s}\alpha_{r}A_{r}XA_{r}^{*}$$ where $s\in \NN$ and $\alpha_{r}\in\RR$ are non-zero and $A_r, A_r^*\in\MM_{n}(\CCC)$, for any $r=1,\dots,s$. If case $\alpha_{r}>0$, then $\Phi$ is known to be \textit{completely positive}. This condition is equivalent with assuming that $\alpha_{r}=1$, for any $r=1,\dots,s$.

We call $\Phi$ \emph{positive} if and only if $\Phi(X)$ is semipositive for any semipositive operator $X\in\MM_{n}(\CCC)$. It is well known that completely positive maps are positive. An operator $T\in\CL(\CCC^n\otimes\CCC^n)$ is called \emph{block positive} if and only if $\langle x\otimes y\!\mid\! T(x\otimes y)\rangle\geq 0$, for any $x,y\in\CCC^{n}$. We recall the following renowned result, see for example \cite{Ja1,Ja2,Ch1}.

\begin{thm} The map \[\mathcal{J}\colon\CL(\MM_n(\CCC))\to\MM_n(\CCC)\otimes\MM_n(\CCC)\cong\CL(\CCC^{n}\otimes\CCC^{n})\] defined by the formula \[\mathcal{J}(\Phi)=\sum_{i,j=1}^{n}E_{ij}\otimes\Phi(E_{ij})\] is an isomorphism of Hilbert spaces. Moreover, a superoperator $\Phi$ is positive if and only if $\mathcal{J}(\Phi)$ is block positive, for any $\Phi\in\CL(\MM_n(\CCC))$.
\end{thm}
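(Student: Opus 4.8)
The statement splits into two independent claims, and the plan is to settle each by direct computation. For the first, $\mathcal{J}$ is linear because for each fixed pair $(i,j)$ the map $\Phi\mapsto\Phi(E_{ij})$ is linear; since $\dim\CL(\MM_n(\CCC))=(n^{2})^{2}=n^{4}=\dim\bigl(\MM_n(\CCC)\otimes\MM_n(\CCC)\bigr)$, it suffices to prove injectivity. If $\mathcal{J}(\Phi)=\sum_{i,j=1}^{n}E_{ij}\otimes\Phi(E_{ij})=0$, then, as the $E_{ij}$ are linearly independent in the first tensor factor, every coefficient $\Phi(E_{ij})$ vanishes, and since the $E_{ij}$ also form a basis of $\MM_n(\CCC)$ this forces $\Phi=0$; hence $\mathcal{J}$ is bijective. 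To see that it is an isometry for the natural Hilbert--Schmidt inner products (for which the $E_{ij}$ form an orthonormal basis of $\MM_n(\CCC)$, and the inner product of two superoperators is $\langle\Phi\!\mid\!\Psi\rangle=\sum_{i,j}\langle\Phi(E_{ij})\!\mid\!\Psi(E_{ij})\rangle$), I would expand
\[\langle\mathcal{J}(\Phi)\!\mid\!\mathcal{J}(\Psi)\rangle=\sum_{i,j,k,l=1}^{n}\langle E_{ij}\!\mid\! E_{kl}\rangle\,\langle\Phi(E_{ij})\!\mid\!\Psi(E_{kl})\rangle=\sum_{i,j=1}^{n}\langle\Phi(E_{ij})\!\mid\!\Psi(E_{ij})\rangle,\]
using $\langle E_{ij}\!\mid\! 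E_{kl}\rangle=\delta_{ik}\delta_{jl}$. Thus $\mathcal{J}$ is an isomorphism of Hilbert spaces.

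For the second claim I would first reduce both conditions to rank-one operators. By the spectral theorem every semipositive $X\in\MM_n(\CCC)$ is a nonnegative linear combination of matrices of the form $xx^{*}$ with $x\in\CCC^{n}$, and each such $xx^{*}$ is semipositive; since $\Phi$ is linear it follows that $\Phi$ is positive if and only if $\langle y\!\mid\!\Phi(xx^{*})y\rangle\geq 0$ for all $x,y\in\CCC^{n}$. On the other side, by definition $\mathcal{J}(\Phi)$ is block positive if and only if $\langle x\otimes y\!\mid\!\mathcal{J}(\Phi)(x\otimes y)\rangle\geq 0$ for all $x,y\in\CCC^{n}$.

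The crux is then a single identity. Using the definition of $\mathcal{J}$, bilinearity of the tensor product, $E_{ij}x=x_{j}e_{i}$, and the product formula for the inner product on $\CCC^{n}\otimes\CCC^{n}$, one computes
\[\langle x\otimes y\!\mid\!\mathcal{J}(\Phi)(x\otimes y)\rangle=\sum_{i,j=1}^{n}\langle x\!\mid\! E_{ij}x\rangle\,\langle y\!\mid\!\Phi(E_{ij})y\rangle=\sum_{i,j=1}^{n}\ov{x_{i}}x_{j}\,\langle y\!\mid\!\Phi(E_{ij})y\rangle=\langle y\!\mid\!\Phi(X_{x})y\rangle,\]
where $X_{x}=\sum_{i,j}\ov{x_{i}}x_{j}E_{ij}=\ov{x}\,\ov{x}^{*}$ is the matrix with $(i,j)$-entry $\ov{x_{i}}x_{j}$; this $X_{x}$ is semipositive, and as $x$ ranges over $\CCC^{n}$ the operator $X_{x}$ ranges over all matrices of the form $zz^{*}$, $z\in\CCC^{n}$. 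Combining this identity with the reductions above, $\mathcal{J}(\Phi)$ is block positive if and only if $\langle y\!\mid\!\Phi(X_{x})y\rangle\geq 0$ for all $x,y$, if and only if $\langle y\!\mid\!\Phi(zz^{*})y\rangle\geq 0$ for all $z,y\in\CCC^{n}$, if and only if $\Phi$ is positive, which is the assertion.

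None of these steps is deep --- the argument is linear algebra plus the spectral theorem --- so I do not anticipate a genuine obstacle. The one place that demands care is the bookkeeping of complex conjugates and transposes in the displayed identity: one must check that the matrix $X_{x}$ occurring there really is semipositive and that $x\mapsto X_{x}$ surjects onto all operators $zz^{*}$, since an error there would corrupt precisely the equivalence between positivity of $\Phi$ and block positivity of $\mathcal{J}(\Phi)$ that the theorem claims.
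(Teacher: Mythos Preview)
Your proof is correct and complete. The paper, however, does not prove this theorem at all: it merely states it as a ``renowned result'' and refers to the literature (Jamio{\l}kowski's original papers and Choi's paper) for the argument. So rather than taking the same or a different route, you have supplied a self-contained proof where the paper provides only a citation. Your approach is the standard one --- linearity plus a dimension count for bijectivity, the orthonormality of the $E_{ij}$ for the isometry, and the rank-one reduction via the spectral theorem for the positivity equivalence --- and your bookkeeping of conjugates in the identity $X_{x}=\ov{x}\,\ov{x}^{*}$ is correct under the paper's convention $\langle a\!\mid\! b\rangle=\sum_{k}\ov{a_{k}}b_{k}$.
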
 The isomorphism $\mathcal{J}$ from the above theorem is known as the \textit{Choi-Jamio{\l}kowski isomorphism}.

Assume that $\Phi\in\CL(\MM_n(\CCC))$ preserves hermiticity. Observe that in this case $\mathcal{J}(\Phi)\in\MM_n(\CCC)\otimes\MM_n(\CCC)$ is selfadjoint. Indeed, since $\Phi(X)^{*}=\Phi(X^{*})$, we get $$\mathcal{J}(\Phi)^{*}=(\sum_{k,l=1}^{n}E_{kl}\otimes\Phi(E_{kl}))^{*}=\sum_{k,l=1}^{n}E_{kl}^{*}\otimes\Phi(E_{kl})^{*}=
\sum_{k,l=1}^{n}E_{lk}\otimes\Phi(E_{lk})=\mathcal{J}(\Phi),$$ because $E_{kl}^{*}=E_{lk}$, for any $k,l=1,\dots n$.
Now we calculate the matrix of the operator $\mathcal{J}(\Phi)$ in the standard basis $\epsilon_{ij}=e_{i}\otimes e_{j}$ of the space $\CCC^{n}\otimes\CCC^{n}$. We call this matrix the \emph{$CJ$-matrix} for $\Phi$ and denote it by $M(\mathcal{J}(\Phi))$.

Assume that $T\in\CL(\CCC^{n}\otimes\CCC^{n})$ is an operator. We denote by $T_{(ij)(kl)}$ the complex numbers such that $$T(\epsilon_{ij})=\sum_{k,l=1}^{n}T_{(ij)(kl)}\epsilon_{kl},$$ for any $i,j=1,\hdots,n$. The following fact is proved in \cite[Theorem 3.5 (2)]{PSJ1}, see also \cite[Section 2]{PSJ2}. We sketch its proof for convenience.

\begin{prop} Assume that $\Phi\in\CL(\MM_{n}(\CCC))$ is a hermiticity-preserving superoperator such that $\Phi(X)=\sum_{r=1}^{s}\alpha_{r}A_{r}XA_{r}^{*}$ and $A_{r}=[a_{ij}^{r}]$, for any $r=1,\dots,s$. Then we have the equality $$\mathcal{J}(\Phi)_{(ij)(kl)}=\sum_{r=1}^{s}\alpha_{r}a_{lk}^{r}\ov{a_{ji}^{r}},$$ for any $i,j,k,l=1,\dots,n$.
\end{prop}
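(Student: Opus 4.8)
The plan is to compute $\mathcal{J}(\Phi)_{(ij)(kl)}$ directly from the definitions, reducing everything to the action of the elementary operators $E_{pq}$ on basis vectors and the linearity of $\Phi$. First I would unwind the definition of $\mathcal{J}$: since $\mathcal{J}(\Phi)=\sum_{p,q=1}^{n}E_{pq}\otimes\Phi(E_{pq})$, its action on a basis vector $\epsilon_{ij}=e_i\otimes e_j$ is $\mathcal{J}(\Phi)(e_i\otimes e_j)=\sum_{p,q}(E_{pq}e_i)\otimes(\Phi(E_{pq})e_j)$. Using $E_{pq}e_i=\delta_{qi}e_p$, this collapses the sum over $q$ and leaves $\sum_{p=1}^{n}e_p\otimes(\Phi(E_{pi})e_j)$.

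Next I would expand $\Phi(E_{pi})$ using the operator-sum form $\Phi(X)=\sum_{r=1}^{s}\alpha_r A_r X A_r^{*}$. With $A_r=[a_{ij}^{r}]$ one has $A_r^{*}=[\ov{a_{ji}^{r}}]$, so a short computation with matrix units gives $A_r E_{pi} A_r^{*}=\sum_{k,l=1}^{n}a_{kp}^{r}\ov{a_{li}^{r}}\,E_{kl}$ (the $(k,l)$ entry of $A_r E_{pi} A_r^{*}$ being $a_{kp}^{r}$ times the $(i,l)$ entry of $A_r^{*}$, i.e. $a_{kp}^{r}\ov{a_{li}^{r}}$). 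Hence $\Phi(E_{pi})=\sum_{r,k,l}\alpha_r a_{kp}^{r}\ov{a_{li}^{r}}E_{kl}$, and applying this to $e_j$ via $E_{kl}e_j=\delta_{lj}e_k$ kills the sum over $l$, yielding $\Phi(E_{pi})e_j=\sum_{r,k}\alpha_r a_{kp}^{r}\ov{a_{ji}^{r}}\,e_k$.

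Substituting back, $\mathcal{J}(\Phi)(e_i\otimes e_j)=\sum_{p,k}\big(\sum_{r}\alpha_r a_{kp}^{r}\ov{a_{ji}^{r}}\big)e_p\otimes e_k$. Comparing with the defining expansion $\mathcal{J}(\Phi)(\epsilon_{ij})=\sum_{k,l}\mathcal{J}(\Phi)_{(ij)(kl)}\,\epsilon_{kl}=\sum_{k,l}\mathcal{J}(\Phi)_{(ij)(kl)}\,e_k\otimes e_l$, I read off the coefficient of $e_k\otimes e_l$ by matching $e_p\otimes e_k$ in my expression with $e_k\otimes e_l$ in the target, i.e. renaming the summation index $p\mapsto k$ and $k\mapsto l$. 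This gives $\mathcal{J}(\Phi)_{(ij)(kl)}=\sum_{r=1}^{s}\alpha_r a_{lk}^{r}\ov{a_{ji}^{r}}$, which is exactly the claimed formula.

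The only real subtlety — and the step most prone to error — is the bookkeeping of indices: one must carefully track which index of $a^{r}$ corresponds to a row versus a column after conjugate-transposing $A_r$, and then correctly align the tensor factors when reading off $\mathcal{J}(\Phi)_{(ij)(kl)}$, since the ordered pair $(ij)$ labels the input basis vector and $(kl)$ the output. Everything else is routine linear algebra with matrix units, so I would present the computation compactly, emphasizing the two index substitutions $E_{pq}e_i=\delta_{qi}e_p$ and $E_{kl}e_j=\delta_{lj}e_k$ as the places where sums collapse.
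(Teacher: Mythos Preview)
Your proof is correct and follows essentially the same approach as the paper's: apply $\mathcal{J}(\Phi)$ to $\epsilon_{ij}$, collapse the sum over the matrix units using $E_{pq}e_i=\delta_{qi}e_p$, and then read off the $(kl)$-coefficient from the explicit form of $\Phi(E_{ki})e_j$. The paper's version simply writes $\mathcal{J}(\Phi)_{(ij)(kl)}=e_{l}^{tr}\Phi(E_{ki})e_{j}$ and declares the final formula to follow ``by straightforward calculations,'' whereas you have spelled out those calculations in full; there is no substantive difference.
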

\begin{proof} Since obviously $$\mathcal{J}(\Phi)(\epsilon_{ij})=\sum_{k,l=1}^{n}(E_{kl}e_{i})\otimes(\Phi(E_{kl})e_{j})=\sum_{k=1}^{n}e_{k}\otimes(\Phi(E_{ki})e_{j}),$$ we conclude by straightforward calculations that $$\mathcal{J}(\Phi)_{(ij)(kl)}=e_{l}^{tr}\Phi(E_{ki})e_{j}=\sum_{r=1}^{s}\alpha_{r}a_{lk}^{r}\ov{a_{ji}^{r}}.$$ This shows the claim.
\end{proof} Observe that $\ov{\mathcal{J}(\Phi)_{(ij)(kl)}}=\mathcal{J}(\Phi)_{(kl)(ij)}$, for any $i,j,k,l\in 1,\dots,n$, which also shows that $\mathcal{J}(\Phi)$ is selfadjoint. Consequently, the $CJ$-matrix $M(\mathcal{J}(\Phi))$, for $\Phi$ defined as above, is an $n^{2}\times n^{2}$ complex selfadjoint matrix which is the transpose of the matrix $[m_{(ij)(kl)}]$ whose rows and columns are indexed by pairs $(i,j)$ and $(k,l)$ (where $i,j,k,l=1,\hdots,n$) such that $$m_{(ij)(kl)}=\sum_{r=1}^{s}\alpha_{r}a_{lk}^{r}\ov{a_{ji}^{r}}.$$
Moreover, since $\mathcal{J}(\Phi)_{(ij)(kl)}=e_{l}^{tr}\Phi(E_{ki})e_{j}$, we get that $M(\mathcal{J}(\Phi))$ is a block matrix of the form
\[
M\left(\mathcal{J}\left(\Phi\right)\right)=\left[\begin{array}{ccc}
\Phi(E_{11})  & \cdots & \Phi(E_{1n})\\
\vdots  &  & \vdots\\
\Phi(E_{n1})  & \cdots &\Phi(E_{nn})
\end{array}\right].
\] We denote by $\chi_{\Phi}$ the characteristic polynomial of $M(\mathcal{J}(\Phi))$, that is \[\chi_{\Phi}(x)=\det(M(\mathcal{J}(\Phi))-xI)\] where $I$ is the $n^{2}\times n^{2}$ identity matrix.

\begin{remark} Since the matrix $M(\mathcal{J}(\Phi))\in\MM_{n^{2}}(\CCC)$ is selfadjoint, the characteristic polynomial $\chi_{\Phi}$ is real (of degree $n^{2}$) and has only real roots, meaning that any complex root of $\chi_{\Phi}$ is in fact real. In other words, the $CJ$-matrix $M(\mathcal{J}(\Phi))$ has only real eigenvalues.
\end{remark}

It is known that if $\CH_{1},\CH_{2}$ are Hilbert spaces with dimensions $m,n$, respectively, then any block positive operator on $\CH_{1}\otimes\CH_{2}$ has the number of negative eigenvalues (counted with multiplicities) not greater then $(m-1)(n-1)$, see for example \cite[Section 3]{Ja3}. This yields the following property which is a fundamental ingredient of our main results.

\begin{prop} Assume that $\Phi\in\CL(\MM_n(\CCC))$ is a hermiticity-preserving superoperator. If the number of all negative roots of $\chi_{\Phi}$, counted with multiplicities, is greater or equal to $(n-1)^{2}+1=n^{2}-2n+2$, then $\Phi$ is not positive.
\end{prop}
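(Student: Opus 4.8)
The plan is to prove the contrapositive, combining the Choi--Jamio{\l}kowski isomorphism of Theorem 2.1 with the eigenvalue bound for block positive operators recalled immediately before the statement. So I would begin by assuming that $\Phi$ is positive and derive that the number of negative roots of $\chi_{\Phi}$ is at most $(n-1)^{2}$, which contradicts the hypothesis.

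Concretely, suppose $\Phi\in\CL(\MM_{n}(\CCC))$ preserves hermiticity and is positive. By Theorem 2.1, positivity of $\Phi$ is equivalent to block positivity of the operator $\mathcal{J}(\Phi)\in\CL(\CCC^{n}\otimes\CCC^{n})$. Applying the quoted fact to the Hilbert spaces $\CH_{1}=\CH_{2}=\CCC^{n}$, whose dimensions are both $n$, a block positive operator on $\CCC^{n}\otimes\CCC^{n}$ has at most $(n-1)(n-1)=(n-1)^{2}$ negative eigenvalues, counted with multiplicities. Hence $\mathcal{J}(\Phi)$ has at most $(n-1)^{2}$ negative eigenvalues.

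It then remains to translate this spectral statement about the operator $\mathcal{J}(\Phi)$ into one about the polynomial $\chi_{\Phi}$. Since $M(\mathcal{J}(\Phi))$ is, by construction, the matrix of $\mathcal{J}(\Phi)$ in the fixed standard basis $\epsilon_{ij}=e_{i}\otimes e_{j}$ of $\CCC^{n}\otimes\CCC^{n}$, the eigenvalues of $\mathcal{J}(\Phi)$, taken with multiplicities, coincide with the roots of $\chi_{\Phi}(x)=\det(M(\mathcal{J}(\Phi))-xI)$; by Remark 2.3 these roots are all real. In particular the number of negative eigenvalues of $\mathcal{J}(\Phi)$ counted with multiplicities equals the number of negative roots of $\chi_{\Phi}$ counted with multiplicities. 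Combining this with the previous paragraph, the number of negative roots of $\chi_{\Phi}$ is at most $(n-1)^{2}$, contradicting the assumption that it is at least $(n-1)^{2}+1$. Therefore $\Phi$ is not positive, as claimed.

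I do not expect any genuine obstacle here: the argument is a short chain of implications, and both nontrivial inputs (Theorem 2.1 and the block-positivity eigenvalue bound) are already available. The only point that deserves an explicit line is the identification of the negative spectrum of the operator $\mathcal{J}(\Phi)$ with the set of negative roots of $\chi_{\Phi}$, i.e.\ the observation that passing from the operator to its matrix $M(\mathcal{J}(\Phi))$ in a fixed basis leaves the multiset of eigenvalues unchanged; this is routine linear algebra and is already implicitly used in Remark 2.3.
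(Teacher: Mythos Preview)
Your argument is correct and is exactly the approach the paper takes: the paper states the proposition as an immediate consequence of the preceding eigenvalue bound for block positive operators on $\CH_{1}\otimes\CH_{2}$ combined with the Choi--Jamio{\l}kowski isomorphism, without writing out a separate proof. Your proposal simply makes explicit the contrapositive chain (positivity of $\Phi$ $\Rightarrow$ block positivity of $\mathcal{J}(\Phi)$ $\Rightarrow$ at most $(n-1)^{2}$ negative eigenvalues $\Rightarrow$ at most $(n-1)^{2}$ negative roots of $\chi_{\Phi}$), which is precisely what the paper leaves implicit.
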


\section{Descartes rule of signs}

In its most general form, the \emph{Descartes rule of signs} states that the number of positive roots of a univariate real polynomial equals the number of \emph{sign variations} in its \emph{sign sequence}, minus some even number, see Theorem {\ref{3.2}} (1). Moreover, if the polynomial has only real roots, then both numbers coincide, see Theorem \ref{3.2} (2). It was observed in \cite{Ja3} that this fact, together with Proposition 2.4, gives a handy criterion for the nonpositivity of hermiticity-preserving maps. The criterion is recalled in Corollary 3.3. 

We start this section by establishing the notation and terminology which are necessary to express the Descartes rule of signs in a proper way. Assume that $\Sigma=\{+,-\}$ is an alphabet and denote by $\Sigma^{*}$ the set of all words over $\Sigma$. We view the set $\Sigma^{*}$ as a monoid whose binary operation is the concatenation of words and neutral element is the empty word. If $\sigma,\tau\in\Sigma^{*}$, then $\sigma\tau$ denotes the concatenation of $\sigma$ and $\tau$. If $n\in\NN$, then $\sigma^{n}$ denotes the word $\sigma\sigma\hdots\sigma$ ($n$ is the number of words $\sigma$ in this sequence). If $\sigma\in\Sigma^{*}$, then $l(\sigma)$ is the \emph{length} of $\sigma$, understood in a natural way. If $i=1,\hdots,l(\sigma)$, then $\sigma(i)\in\Sigma$ is the $i$-th letter of $\sigma$, counted from the right.
A subword of $\sigma\in\Sigma^{*}$ of the form $+-$ or $-+$ is called a \emph{sign variation} in $\sigma$. We denote by $\lambda(\sigma)$ the
number of sign variations in $\sigma$. Moreover, with any sequence $\underline{a}=(a_{n},\hdots,a_{0})\in\RR^{n+1}$ of non-zero real numbers we associate a word $s(\underline{a})=s(a_{n})s(a_{n-1})\hdots s(a_{0})\in\Sigma^{*}$, where $s(a)=+$ if $a>0$, otherwise $s(a)=-$,
for any non-zero $a\in \mathbb{R}$.

In the paper we use the standard notation  $\RR[x]$, $\CCC[x]$, $\RR[t,x]$ and $\CCC[t,x]$ for the rings of polynomials in one variable $x$ or two variables $t,x$ over fields $\RR$ and $\CCC$. Polynomials are identified with polynomial functions. Elements of $\RR[t,x]$ and $\CCC[t,x]$ are sometimes called \emph{bivariate polynomials}. In this paper we often treat bivariate polynomials as univariate polynomials over rings of polynomials, i.e. we identify $\RR[t,x]$ with $\RR[t][x]$ and $\CCC[t,x]$ with $\CCC[t][x]$. For $f \in \CCC[t,x]$ and $t_0$ a fixed complex number we denote by $f(t_0)$ the unique polynomial in $\CCC[x]$ such that $f(t_{0})(x)=f(t_{0},x)$, for any $x\in\CCC$.

Assume now that $f\in\RR[x]$ and $f(x)=a_{n_{k}}x^{n_{k}}+a_{n_{k-1}}x^{n_{k-1}}+\hdots + a_{n_{0}}x^{n_{0}}$, where $n_{k}>n_{k-1}>\hdots>n_{0}$ is a strictly decreasing sequence of nonzero natural numbers. Obviously, every polynomial $f$ can be written in that form. We call $\underline{a}=(a_{n_{k}},a_{n_{k-1}},\hdots,a_{n_{0}})$ the \emph{coefficient sequence} of $f$. We set $$s(f)=s(a_{n_{k}},a_{n_{k-1}},\hdots,a_{n_{0}})\in\Sigma^{*}$$ and call it the \textit{sign sequence} of the polynomial $f$. We also denote by $\varrho(f)$, $\varrho_{+}(f)$ and $\varrho_{-}(f)$ the number of all real roots of $f$, the number of all positive roots of $f$ and the number of all negative roots of $f$, respectively. In any case we count the number of roots together with multiplicities.

The following useful fact is known as the \textit{Descartes rule of signs}, see for example \cite[Section 2]{BPR}. In the paper we apply only the second assertion of this theorem.

\begin{thm} \label{3.2} Assume that $f\in\RR[x]$. The following assertions hold.
\begin{enumerate}[\rm(1)]
  \item We have $\varrho_{+}(f)=\lambda(s(f))-2k$ where $k\geq 0$ is some natural number, i.e. the number of positive roots of $f$ equals
      the number of sign variations in the sign sequence $s(f)$ of $f$ minus some even natural number.
  \item  Assume that every complex root of $f\in\RR[x]$ is real. In this case $\varrho_{+}(f)=\lambda(s(f))$, i.e. the number of positive roots of $f$
      equals the number of sign variations in the sign sequence $s(f)$ of $f$.
\end{enumerate}
\end{thm}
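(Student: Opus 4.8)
The plan is to establish (1) by the classical induction underlying Descartes' rule and then derive (2) by applying (1) to $f(x)$ and to $f(-x)$ together. Throughout I may assume $f$ is nonconstant and, after dividing out the power $x^{n_{0}}$ (which changes neither $\varrho_{+}(f)$, since $0$ is not a positive root, nor the sign sequence $s(f)$, which records only nonzero coefficients) and possibly replacing $f$ by $-f$, that $n_{0}=0$ and $a_{n_{0}}>0$. The first ingredient is the parity statement $\varrho_{+}(f)\equiv\lambda(s(f))\pmod 2$. Indeed, the sign of $f$ on $(0,\varepsilon)$ is $\sgn(a_{n_{0}})$ and the sign of $f$ on $(R,\infty)$ for large $R$ is $\sgn(a_{n_{k}})$, so the number of sign changes of the continuous function $f$ on $(0,\infty)$ is even iff these two signs agree; and this number has the same parity as $\varrho_{+}(f)$, because $f$ changes sign at a root exactly when that root has odd multiplicity, whence $\varrho_{+}(f)=\sum_{i}\mu_{i}\equiv\#\{i:\mu_{i}\text{ odd}\}\pmod 2$. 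On the other hand a word over $\{+,-\}$ has an even number of sign variations iff its first and last letters agree, and the first and last letters of $s(f)$ are $\sgn(a_{n_{k}})$ and $\sgn(a_{n_{0}})$. This matches the two parities.

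The main work is the inequality $\varrho_{+}(f)\le\lambda(s(f))$, which I would prove by induction on $\deg f$, the case $\deg f=0$ being trivial. Put $g=f'$. Since differentiation multiplies each surviving coefficient by a positive integer and annihilates the constant term $a_{n_{0}}$, the sign sequence $s(g)$ is obtained from $s(f)$ by deleting its rightmost letter, so $\lambda(s(f))-\lambda(s(g))=1$ if $\sgn(a_{n_{1}})\ne\sgn(a_{n_{0}})$ and $=0$ otherwise. Rolle's theorem, applied between the consecutive positive roots $\rho_{1}<\dots<\rho_{m}$ of $f$ and combined with the fact that a root of $f$ of multiplicity $\mu$ is a root of $f'$ of multiplicity $\mu-1$, gives $\varrho_{+}(g)\ge\sum_{i}(\mu_{i}-1)+(m-1)=\varrho_{+}(f)-1$. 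If $\sgn(a_{n_{1}})\ne\sgn(a_{n_{0}})$, then $\lambda(s(f))=\lambda(s(g))+1\ge\varrho_{+}(g)+1\ge\varrho_{+}(f)$ by the inductive hypothesis, and we are done. The delicate case is $\sgn(a_{n_{1}})=\sgn(a_{n_{0}})$, where $\lambda(s(f))=\lambda(s(g))$ and the naive Rolle bound falls one short; here I improve it to $\varrho_{+}(g)\ge\varrho_{+}(f)$ by observing that $f(0)=a_{n_{0}}>0$ while $f$ is positive on $(0,\rho_{1})$ and vanishes at $\rho_{1}$, so $f$ attains its maximum on $[0,\rho_{1}]$ at an interior point $\xi\in(0,\rho_{1})$, producing a zero $\xi$ of $f'$ strictly below $\rho_{1}$ and hence distinct from all the zeros of $f'$ already counted. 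Combining this inequality with the parity statement yields $\varrho_{+}(f)=\lambda(s(f))-2k$ for some $k\ge 0$, which is (1). I expect the bookkeeping in this last case — exhibiting the extra critical point and checking it is not double-counted — to be the only point requiring care; the rest is routine.

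For (2), apply (1) to $f$ and to $\widetilde f(x):=f(-x)$, writing $\varrho_{+}(f)=\lambda(s(f))-2a$ and $\varrho_{-}(f)=\varrho_{+}(\widetilde f)=\lambda(s(\widetilde f))-2b$ with $a,b\ge 0$ (the substitution $x\mapsto -x$ carries the positive roots of $\widetilde f$ bijectively, with multiplicities, onto the negative roots of $f$). Since every complex root of $f$ is real, $\deg f=n_{k}$ equals the total root count, so $\varrho_{+}(f)+\varrho_{-}(f)=n_{k}-n_{0}$, where $n_{0}$ is the multiplicity of the root $0$. It then suffices to verify $\lambda(s(f))+\lambda(s(\widetilde f))\le n_{k}-n_{0}$: at the adjacency between two consecutive surviving exponents $n_{j+1}>n_{j}$, passing from $f$ to $\widetilde f$ multiplies the two relevant coefficients by $(-1)^{n_{j+1}}$ and $(-1)^{n_{j}}$, so exactly one of $s(f),s(\widetilde f)$ has a sign variation there when $n_{j+1}-n_{j}$ is odd, and both or neither when $n_{j+1}-n_{j}$ is even; in every case the total number of variations that adjacency contributes to the two words together is at most $n_{j+1}-n_{j}$, and summing over $j$ telescopes to $n_{k}-n_{0}$. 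Hence $n_{k}-n_{0}=\varrho_{+}(f)+\varrho_{-}(f)=\lambda(s(f))+\lambda(s(\widetilde f))-2(a+b)\le(n_{k}-n_{0})-2(a+b)$, which forces $a=b=0$, so $\varrho_{+}(f)=\lambda(s(f))$, proving (2).
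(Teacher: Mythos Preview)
The paper does not prove this theorem; it records it as the classical Descartes rule of signs and cites \cite[Section~2]{BPR}. Your write-up supplies an actual argument, and the overall strategy---parity via the signs of $f$ near $0^+$ and $+\infty$, the bound $\varrho_{+}(f)\le\lambda(s(f))$ by induction on $\deg f$ through $g=f'$, and part~(2) by pairing $f$ with $\widetilde f(x)=f(-x)$ and bounding $\lambda(s(f))+\lambda(s(\widetilde f))$ adjacency by adjacency---is the standard one and is correct in outline.

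There is, however, one real gap in the ``delicate case'' $\sgn(a_{n_1})=\sgn(a_{n_0})$. You claim that since $f(0)=a_{n_0}>0$, $f>0$ on $(0,\rho_1)$, and $f(\rho_1)=0$, the maximum of $f$ on $[0,\rho_1]$ must lie at an interior point. That inference is false as written: take $f(x)=1-x$ on $[0,1]$, which satisfies all three hypotheses yet attains its maximum at the left endpoint. What is missing is precisely the case hypothesis you have not yet invoked: because $a_{n_1}>0$, the lowest-order term of $f'(x)$ is $n_1 a_{n_1}x^{\,n_1-1}$, so $f'>0$ on some $(0,\varepsilon)$, $f$ is strictly increasing there, and hence $f(\varepsilon/2)>f(0)$; thus the maximum on $[0,\rho_1]$ is neither at $0$ nor at $\rho_1$ (where $f$ vanishes), and an interior critical point $\xi\in(0,\rho_1)$ exists. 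With this one-line repair the extra zero of $f'$ is produced and is automatically distinct from the Rolle zeros and the multiple-root zeros of $f'$ already counted, since those all lie in $[\rho_1,\rho_m]$. The remainder of your argument---the parity computation, the Rolle count, and the telescoping bound in~(2)---is clean.
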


Let us illustrate some of the above notions in a simple example.
\begin{ex} Assume that $f(x)=x^{5}+3x^{4}-x^{3}-7x^{2}+4=(x-1)^{2}(x+2)^{2}(x+1).$ Then $(1,3,-1,-7,4)$ is the coefficient sequence of $f$ and $$s(f)=s(1,3,-1,-7,4)=++--+$$ is its sign sequence. Moreover, we have $\varrho(f)=5$, $\varrho_{+}(f)=2$ and $\varrho_{-}(f)=3$. Observe that $\lambda(s(f))=2$ which is consistent with Theorem \ref{3.2} (2), because $f$ has only real roots.
\end{ex}

The following useful criterion for nonpositivity of hermiticity-preserving superoperators, based on the crucial Proposition 2.4, was first shown in \cite{Ja3}. Our aim is to generalize this criterion for \emph{one-parameter families} of hermiticity-preserving superoperators which we introduce in the next section.

For $f\in\RR[x]$ we denote by $f^{-}\in\RR[x]$ a polynomial such that $f^{-}(x)=f(-x)$, for any $x\in\RR$. Obviously, $r$ is a negative (positive, respectively) root of $f$ if and only if $-r$ is a positive (negative, respectively) root of $f^{-}$.

\begin{cor} Assume that $\Phi\in\CL(\MM_{n}(\CCC))$ preserves hermiticity and $\chi_{\Phi}\in\RR[x]$ is the characteristic polynomial of the $CJ$-matrix $M(\mathcal{J}(\Phi))$. If $\lambda(s(\chi_{\Phi}^{-}))\geq n^{2}-2n+2$, then $\Phi$ is not positive.
\end{cor}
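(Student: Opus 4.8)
The plan is to combine the Descartes rule of signs (Theorem~\ref{3.2}~(2)) with the eigenvalue bound of Proposition 2.4, passing between negative roots of $\chi_{\Phi}$ and positive roots of $\chi_{\Phi}^{-}$ along the way. First I would recall that, since $\Phi$ preserves hermiticity, the $CJ$-matrix $M(\mathcal{J}(\Phi))$ is selfadjoint, so by the Remark following its definition the characteristic polynomial $\chi_{\Phi}\in\RR[x]$ has only real roots. The polynomial $\chi_{\Phi}^{-}$ defined by $\chi_{\Phi}^{-}(x)=\chi_{\Phi}(-x)$ is then also a real polynomial all of whose complex roots are real, so Theorem~\ref{3.2}~(2) applies to it and gives $\varrho_{+}(\chi_{\Phi}^{-})=\lambda(s(\chi_{\Phi}^{-}))$.

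Next I would use the elementary observation, recorded just before the corollary, that $r$ is a negative root of $\chi_{\Phi}$ if and only if $-r$ is a positive root of $\chi_{\Phi}^{-}$, and that this correspondence preserves multiplicities; hence $\varrho_{-}(\chi_{\Phi})=\varrho_{+}(\chi_{\Phi}^{-})$. Combining this with the previous step yields $\varrho_{-}(\chi_{\Phi})=\lambda(s(\chi_{\Phi}^{-}))$. Now the hypothesis $\lambda(s(\chi_{\Phi}^{-}))\geq n^{2}-2n+2$ translates directly into the statement that the number of negative roots of $\chi_{\Phi}$, counted with multiplicities, is at least $n^{2}-2n+2=(n-1)^{2}+1$.

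Finally I would invoke Proposition 2.4: its hypothesis is exactly that the number of negative roots of $\chi_{\Phi}$ (with multiplicities) is $\geq (n-1)^{2}+1$, and its conclusion is that $\Phi$ is not positive. Chaining the three equalities/inequalities established above with Proposition 2.4 completes the argument.

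There is no genuine obstacle here; the only point requiring a little care is the bookkeeping about multiplicities and the fact that one may freely apply Theorem~\ref{3.2}~(2) to $\chi_{\Phi}^{-}$ rather than $\chi_{\Phi}$ — this is legitimate precisely because the reality of all roots of $\chi_{\Phi}$ (guaranteed by selfadjointness of the $CJ$-matrix) is inherited by $\chi_{\Phi}^{-}$, whose roots are the negatives of those of $\chi_{\Phi}$. One should also note that if the leading or trailing coefficient structure makes $\chi_{\Phi}^{-}$ formally different in sign pattern from a naive sign-flip of $s(\chi_{\Phi})$, this causes no issue, since $s(\chi_{\Phi}^{-})$ is defined intrinsically from the coefficient sequence of $\chi_{\Phi}^{-}$ and Theorem~\ref{3.2}~(2) is applied to that sequence as is.
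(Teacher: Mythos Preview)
Your proposal is correct and follows essentially the same approach as the paper's own proof: use selfadjointness of $M(\mathcal{J}(\Phi))$ (Remark~2.3) to see that $\chi_{\Phi}^{-}$ has only real roots, apply Theorem~\ref{3.2}~(2) to get $\varrho_{+}(\chi_{\Phi}^{-})=\lambda(s(\chi_{\Phi}^{-}))$, identify this with $\varrho_{-}(\chi_{\Phi})$, and conclude via Proposition~2.4. The paper's argument is simply a more compressed version of what you wrote.
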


\begin{proof} It follows from Remark 2.3 that $\chi_{\Phi}^{-}$ has only real roots and hence Theorem \ref{3.2} (2) implies that $\varrho_{+}(\chi_{\Phi}^{-})=\lambda(s(\chi_{\Phi}^{-}))$. Since $\varrho_{+}(\chi_{\Phi}^{-})$ is exactly the number of negative roots of $\chi_{\Phi}$ (counted with multiplicities), the assertion is a consequence of Proposition 2.4.
\end{proof}

\section{One-parameter families of superoperators}

This section introduces the \emph{one-parameter families of hermiticity-preserving superoperators}. These families are sets of hermiticity-preserving
superoperators, indexed by one real parameter $t$, which are determined (in the sense of Definition 4.1) by real and complex polynomial
functions in the variable $t$. We would like to know if such one-parameter families contain hermiticity-preserving superoperators which are not
positive (see Definition 4.2).

In Proposition 4.3 we show that the set of all characteristic polynomials of a one-parameter family of hermiticity-preserving
superoperators may be viewed as a bivariate polynomial in $\RR[t][x]$. This fact has important consequences in next sections, as does Corollary
4.4 which is a natural generalisation of Corollary 3.3 for the case of one-parameter families. The final remark of the section presents an argument  for one parameter families of hermiticity-preserving superoperators to be good approximations of more general situations.

We introduce some notation. Let $\CCC[t_{\RR}]$ denote the set of all polynomial functions $f\colon\RR\to\CCC$ where $f\in\CCC[t]$. For simplicity,
elements of $\CCC[t_{\RR}]$ are also called polynomials. If $f\in\CCC[t_{\RR}]$ and $f(t)=a_{n}t^{n}+\hdots+a_{0}$, then we define
$\ov{f}\in\CCC[t_{\RR}]$ as \[\ov{f}(t):=\ov{f(t)}=\ov{a_{n}t^{n}+\hdots+a_{0}}=\ov{a_{n}}t^{n}+\hdots+\ov{a_{0}}\] and call $\ov{f}$ the
\emph{conjugate} of $f$. Denote by $\MM_{n}(\CCC[t_{\RR}])$ the set of all $n\times n$ matrices with coefficients in $\CCC[t_{\RR}]$. Elements of
$\MM_{n}(\CCC[t_{\RR}])$ are called \textit{one-parameter matrices}.

Assume that $A\in\MM_{n}(\CCC[t_{\RR}])$ is a one-parameter matrix such that $A=[a_{ij}]$. We define the one-parameter matrix
$A^{*}\in\MM_{n}(\CCC[t_{\RR}])$ as the conjugate transpose of $A$, that is, $A^{*}=[\ov{a_{ji}}]$. For a given $t\in\RR$, $A^{*}(t)$ is the matrix adjoint to $A(t)$.

\begin{df} A \emph{one-parameter family of hermiticity-preserving superoperators} is a function $\BPhi\colon\RR\ra\CL(\MM_{n}(\CCC))$ such that there exist a natural number $s\geq1$, real polynomials $\alpha_{1},\hdots,\alpha_{s}\in\RR[t]$ with no real roots and one-parameter matrices $A_{1},\hdots,A_{s}\in\MM_{n}(\CCC[t_{\RR}])$ such that $$\BPhi(t)(X)=\sum_{r=1}^{s}\alpha_{r}(t)A_{r}(t)XA_{r}^{*}(t),$$ for any $t\in\RR$ and $X\in\MM_{n}(\CCC)$.
\end{df} We usually denote a one-parameter family $\BPhi\colon\RR\ra\CL(\MM_{n}(\CCC))$ of hermiticity-preserving superoperators as $$\BPhi=\{\Phi_{t}\colon\MM_{n}(\CCC)\ra\MM_{n}(\CCC)\}_{t\in\RR}$$ where $\Phi_{t}(X):=\BPhi(t)(X)$, for any $t\in\RR$ and $X\in\MM_{n}(\CCC)$. We sometimes call $\BPhi$ a \emph{one-parameter family}, in short.

Observe that if $\BPhi$ is a one-parameter family, then any operator $\Phi_{t}$ preserves hermiticity, because the polynomials
$\alpha_{1},\hdots,\alpha_{s}\in\RR[t]$ have no real roots and hence are strictly positive or strictly negative.

\begin{df} A one-parameter family of hermiticity-preserving superoperators $\BPhi=\{\Phi_{t}\colon\MM_{n}(\CCC)\ra\MM_{n}(\CCC)\}_{t\in\RR}$ is called:
\begin{enumerate}[\rm(1)]
  \item \textit{globally nonpositive} if and only if $\Phi_{t}$ is not positive, for any $t\in\RR$.
  \item \textit{locally nonpositive} if and only if there is an open interval $(u,v)\subseteq\RR$ such that $\Phi_{t}$ is not positive, for any $t\in (u,v)$.
  \item \textit{pointwise nonpositive} if and only if there exists $t\in\RR$ such that $\Phi_{t}$ is not positive.
\end{enumerate}
\end{df}

Our aim in the paper is to apply Corollary 3.3 in order to give sufficient computable criteria for a one-parameter family to be globally nonpositive,
locally nonpositive or pointwise nonpositive.

Assume that $\BPhi=\{\Phi_{t}\colon\MM_{n}(\CCC)\ra\MM_{n}(\CCC)\}_{t\in\RR}$ is a one-parameter family of hermiticity-preserving superoperators. We
set $\chi_{t}:=\chi_{\Phi_{t}}\in\RR[x]$, for any $t\in\RR$.

The following crucial fact shows that the family $\{\chi_{t}\mid t\in\RR\}$ of all characteristic polynomials of the one-parameter family $\BPhi$ can be treated as a real polynomial in two variables $t$ and $x$.

\begin{prop} Assume that $\BPhi=\{\Phi_{t}\colon\MM_{n}(\CCC)\ra\MM_{n}(\CCC)\}_{t\in\RR}$ is a one-parameter family of hermiticity-preserving superoperators such that $$\Phi_{t}(X)=\sum_{r=1}^{s}\alpha_{r}(t)A_{r}(t)XA_{r}^{*}(t)$$ and $A_{r}(t)=[a_{ij}^{r}(t)]$. Then there is a polynomial $\chi_{\BPhi}\in\RR[t,x]$ such that $\chi_{\BPhi}(t)=\chi_{t}$, for any $t\in\RR$.
\end{prop}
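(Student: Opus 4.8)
The plan is to realise $\chi_{\BPhi}$ as the characteristic polynomial of one fixed matrix whose entries are polynomials in $t$, and then to upgrade its a priori complex coefficients to real ones using Remark 2.3.

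First I would apply Proposition 2.2 to each $\Phi_{t}$ separately. Fixing $t\in\RR$, the operator $\Phi_{t}$ has the operator-sum form of Proposition 2.2 with scalars $\alpha_{r}(t)$ and matrices $A_{r}(t)=[a_{ij}^{r}(t)]$, so every entry of the $CJ$-matrix $M(\mathcal{J}(\Phi_{t}))$ equals $\sum_{r=1}^{s}\alpha_{r}(t)\,a_{lk}^{r}(t)\,\ov{a_{ji}^{r}(t)}$ for the appropriate indices $i,j,k,l$. Since $t$ is real we have $\ov{a_{ji}^{r}(t)}=\ov{a_{ji}^{r}}(t)$ with $\ov{a_{ji}^{r}}\in\CCC[t_{\RR}]$, and $\alpha_{r}\in\RR[t]\subseteq\CCC[t_{\RR}]$; hence each such entry is the value at $t$ of the polynomial $\sum_{r=1}^{s}\alpha_{r}\,a_{lk}^{r}\,\ov{a_{ji}^{r}}\in\CCC[t_{\RR}]$. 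Using the identification $\CCC[t_{\RR}]\cong\CCC[t]$ (valid because $\RR$ is infinite), this yields a single matrix $N\in\MM_{n^{2}}(\CCC[t])$ with $N(t)=M(\mathcal{J}(\Phi_{t}))$ for every $t\in\RR$.

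Next I would set $D(t,x):=\det(N-xI)\in\CCC[t][x]=\CCC[t,x]$, where $I$ is the $n^{2}\times n^{2}$ identity. For a fixed $t_{0}\in\RR$, the substitution $t\mapsto t_{0}$ is a ring homomorphism $\CCC[t,x]\to\CCC[x]$, and as the determinant is a polynomial expression in the matrix entries it commutes with this substitution; therefore $D(t_{0},x)=\det(N(t_{0})-xI)=\det(M(\mathcal{J}(\Phi_{t_{0}}))-xI)=\chi_{\Phi_{t_{0}}}(x)=\chi_{t_{0}}(x)$. Thus $D$ already has the desired evaluation property, and it remains only to check that $D$ lies in $\RR[t,x]$ rather than merely $\CCC[t,x]$.

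For that last point I would write $D(t,x)=\sum_{j=0}^{n^{2}}c_{j}(t)x^{j}$ with $c_{j}\in\CCC[t]$. By Remark 2.3 every $\chi_{t_{0}}$ is a real polynomial, so $c_{j}(t_{0})\in\RR$ for each $j$ and each $t_{0}\in\RR$; and a polynomial in $\CCC[t]$ taking only real values on $\RR$ must have real coefficients, since the polynomial formed by the imaginary parts of its coefficients then vanishes on the infinite set $\RR$ and is hence zero. Consequently $c_{j}\in\RR[t]$ for all $j$, so $D\in\RR[t][x]=\RR[t,x]$, and we take $\chi_{\BPhi}:=D$. The proof is mostly bookkeeping on top of Proposition 2.2; the only genuinely delicate steps are the commutation of evaluation with the determinant and this final descent from ``real-valued on $\RR$'' to ``real coefficients'', which together are what force the output into $\RR[t,x]$.
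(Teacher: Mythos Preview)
Your proof is correct and follows essentially the same route as the paper's: both use Proposition~2.2 to exhibit $M(\mathcal{J}(\Phi_{t}))$ as the evaluation of a fixed matrix over $\CCC[t]$, take its characteristic polynomial in $\CCC[t,x]$, and then descend to $\RR[t,x]$ by invoking Remark~2.3 together with the fact that a complex polynomial which is real on all of $\RR$ has real coefficients. The only cosmetic difference is that the paper proves this last fact by subtracting $h$ from $\ov{h}$, whereas you take imaginary parts; these are the same argument.
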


\begin{proof} We conclude from Proposition 2.2 that $$\mathcal{J}(\Phi_{t})_{(ij)(kl)}=\sum_{r=1}^{s}\alpha_{r}(t)a_{lk}^{r}(t)\ov{a_{ji}^{r}(t)}=\sum_{r=1}^{s}\alpha_{r}(t)a_{lk}^{r}(t)\ov{a_{ji}^{r}}(t)$$ and hence $M(\mathcal{J}(\Phi_{t}))\in\MM_{n^{2}}(\CCC[t_{\RR}])$ is a one-parameter matrix.
This implies that the characteristic polynomial $\chi_t$ of $M(\mathcal{J}(\Phi_{t}))$ may be treated as an element of $\CCC[t][x]$, if we view $t$ as a variable. Let us denote this bivariate polynomial by $\chi_{\BPhi}$  with coefficients $h_{1},\hdots,h_{n^{2}}\in\CCC[t]$.

Observe that for a given $t_{0}\in\RR$, $\chi_{\BPhi}(t_{0})=\chi_{t_0}$ is an element of $\RR[x]$, because $M(\mathcal{J}(\Phi_{t_{0}}))$ is selfadjoint, see Remark 2.3. Hence we conclude that any coefficient $h_{i}$ of $\chi_{\BPhi}$ has the property that $h_{i}(t)\in\RR$, for any $t\in\RR$. Then it follows that $h_i\in \RR[t]$. Indeed, assume that $h\in\CCC[t]$ is a polynomial defined as $h=a_{s}t^{s}+\hdots+a_{1}t+a_{0}$ and $h(t)\in\RR$, for any $t\in\RR$. Then we get \[a_{s}t^{s}+\hdots+a_{1}t+a_{0}=h(t)=\ov{h(t)}=\ov{a_{s}}t^{s}+\hdots+\ov{a_{1}}t+\ov{a_{0}}\] and so $$(\ov{a_{s}}-a_{s})t^{s}+\hdots+(\ov{a_{1}}-a_{1})t+(\ov{a_{0}}-a_{0})=0,$$ for any $t\in\RR$. Since any nonzero complex polynomial has only finite number of roots, we conclude that $\ov{a_{s}}-a_{s}=\dots=\ov{a_{1}}-a_{1}=\ov{a_{0}}-a_{0}=0$ and thus $\ov{a_{i}}=a_{i}$, so all $a_i \in \RR$ and hence $h\in\RR[t]$.

The above arguments directly imply that $\chi_{\BPhi}\in\RR[t][x]=\RR[t,x]$.
\end{proof}

We call the bivariate polynomial $\chi_{\BPhi}\in\RR[t][x]$ the \textit{characteristic polynomial} of the one-parameter family
$\BPhi=\{\Phi_{t}\colon\MM_{n}(\CCC)\ra\MM_{n}(\CCC)\}_{t\in\RR}$. We denote by $\chi_{\BPhi}^{-}\in\RR[t][x]$ the bivariate polynomial such that
$\chi_{\BPhi}^{-}(t,x)=\chi_{\BPhi}(t,-x)$, for any $t,x\in\RR$. It is convenient to observe that, in our notation, for any $t\in\RR$ we get the following equalities:
\[\chi_{\BPhi}(t)=\chi_{t}=\chi_{\Phi_{t}}\textnormal{ and thus }\chi_{\BPhi}^{-}(t)=\chi^{-}_{t}=\chi^{-}_{\Phi_{t}}.\]

The following fact is a version of Corollary 3.3 for one-parameter families.

\begin{cor}  Assume that $\BPhi=\{\Phi_{t}\colon\MM_{n}(\CCC)\ra\MM_{n}(\CCC)\}_{t\in\RR}$ is a one-parameter family of hermiticity-preserving superoperators. The following statements hold.
\begin{enumerate}[\rm(1)]
  \item If $\lambda(s(\chi_{t}^{-}))\geq n^{2}-2n+2$, for any $t\in\RR$, then $\BPhi$ is globally nonpositive.
  \item If there is an open interval $(u,v)\subseteq\RR$ such that $\lambda(s(\chi_{t}^{-}))\geq n^{2}-2n+2$, for any $t\in(u,v)$, then $\BPhi$ is locally nonpositive.
  \item If there exists $t\in\RR$ such that $\lambda(s(\chi_{t}^{-}))\geq n^{2}-2n+2$, then $\BPhi$ is pointwise nonpositive.
\end{enumerate}
\end{cor}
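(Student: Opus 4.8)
The plan is simply to apply Corollary 3.3 separately to each superoperator $\Phi_{t}$ occurring in the one-parameter family $\BPhi$. First I would recall that, as noted right after Definition 4.1, every $\Phi_{t}$ is hermiticity-preserving: the real polynomials $\alpha_{1},\hdots,\alpha_{s}$ have no real roots, hence each $\alpha_{r}$ is of constant sign on $\RR$, so $\alpha_{r}(t)\neq 0$ for every $t\in\RR$ and the operator-sum formula for $\Phi_{t}$ has exactly the hermiticity-preserving shape required in Corollary 3.3. Thus Corollary 3.3 is applicable to each $\Phi_{t}$.

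Next I would fix $t\in\RR$ and recall that $\chi_{t}=\chi_{\Phi_{t}}$ is, by definition, the characteristic polynomial of the $CJ$-matrix $M(\mathcal{J}(\Phi_{t}))$, and that $\chi_{t}^{-}=\chi_{\Phi_{t}}^{-}$ in the notation fixed before Corollary 3.3. Since $M(\mathcal{J}(\Phi_{t}))$ is selfadjoint (Remark 2.3), the polynomial $\chi_{t}^{-}$ has only real roots, which is the hypothesis behind the use of Theorem \ref{3.2} (2) in the proof of Corollary 3.3. Consequently, for any $t\in\RR$,
\[
\lambda\bigl(s(\chi_{t}^{-})\bigr)\geq n^{2}-2n+2\ \Longrightarrow\ \Phi_{t}\text{ is not positive}.
\]

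Finally, each of the three assertions follows by matching the quantifier over $t$ in its hypothesis with the corresponding clause of Definition 4.2. For (1), the displayed implication is available for every $t\in\RR$, so every $\Phi_{t}$ is not positive, i.e.\ $\BPhi$ is globally nonpositive. For (2), it is available for every $t$ in the open interval $(u,v)$, so $\Phi_{t}$ is not positive for all $t\in(u,v)$, i.e.\ $\BPhi$ is locally nonpositive. For (3), it is available for some $t\in\RR$, so that particular $\Phi_{t}$ is not positive, i.e.\ $\BPhi$ is pointwise nonpositive.

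I do not expect a genuine obstacle here: the statement is a direct pointwise lift of Corollary 3.3, and the only point deserving attention is the bookkeeping identity $\chi_{t}^{-}=\chi_{\Phi_{t}}^{-}$ together with the fact --- guaranteed by Remark 2.3 --- that these univariate specializations have only real roots, which is precisely what makes Corollary 3.3 (and hence Theorem \ref{3.2} (2) behind it) legitimately applicable at each parameter value. The real work of the paper lies not here but in deciding the three quantified conditions on $\lambda(s(\chi_{t}^{-}))$ effectively, which is taken up in the later sections via the sign variation formulas and the Sturm--Tarski theorem.
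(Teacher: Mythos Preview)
Your proposal is correct and follows essentially the same approach as the paper: apply Corollary 3.3 to each individual $\Phi_{t}$ (which is hermiticity-preserving by the remark after Definition 4.1), and then read off (1)--(3) from Definition 4.2 according to the quantifier on $t$. The paper's proof is simply a terser version of what you wrote; your additional remarks about Remark 2.3 and the identity $\chi_{t}^{-}=\chi_{\Phi_{t}}^{-}$ are accurate but not strictly needed, since they are already absorbed into the statement and proof of Corollary 3.3.
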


\begin{proof} For any $t\in\RR$, $\chi_{t}$ is the characteristic polynomial of a hermiticity-preserving superoperator $\Phi_{t}$.
Hence the above implications follow at once from Corollary 3.3.
\end{proof}

In the above corollary we consider the number of sign variations in the sign sequence of univariate polynomials $\chi_{\BPhi}^{-}(t)=\chi^{-}_{t}\in\RR[x]$. Recall that we define in Section 3 the coefficient sequence only for a polynomial in $\RR[x]$, but the definition generalized naturally to the ring $\RR[t,x]$. Namely, for $f \in \RR[t,x]=\RR[t][x]$ and $f(t,x)=a_{n_k}(t)x^{n_k}+\ldots +a_{n_1}(t)x^{n_1}+a_{n_0}(t)x^{n_0}$ the coefficient sequence is a sequence $\underline{a}(t)=(a_{n_k}(t),\ldots,a_{n_1}(t),a_{n_0}(t))$ of real polynomials in variable $t$. Therefore we may consider the coefficient sequence of characteristic polynomial $\chi_{\BPhi}$ of the family $\BPhi$.

To adopt the concepts of sign sequence and sign variation for the case of $\underline{a}(t)$, we introduce in the next section the notion of \emph{sign variation formula}. It turns out that if $\varphi(t)$ is a sign variation formula (in one free variable $t$), then conditions $\forall_{t\in\RR\snull}\varphi(t)$ and $\exists_{t\in\RR\snull}\varphi(t)$ can be checked effectively using the \emph{Sturm-Tarski theorem} and its corollaries, see Section 6. Moreover, under some natural assumptions on a sign variation formula $\varphi(t)$ associated with $\chi_{\BPhi}^{-}$,
we have the implication \[\varphi(t)\Ra\lambda(s(\chi_{t}^{-}))\geq n^{2}-2n+2,\] for any $t\in\RR$. Therefore the language of sign variation formulas, together with the above Corollary 4.4, provides us with effective methods for verifying the nonpositivity of one-parameter families of hermiticity-preserving superoperators.

\begin{remark} In view of Definition 4.1, one-parameter families of hermiticity-preserving superoperators are determined, in suitable sense, by polynomials. This assumption is fundamental in our considerations, because it allows us applying the Sturm-Tarski theorem and its corollaries in order to obtain computable conditions for nonpositivity. Although it may seem restrictive to accept only polynomials, the restriction
can be omitted, at least in some situations. Indeed, for many applications it would be sufficient to assume that the functions in the definition of one-parameter family are Taylor expandable. Then such general one-parameter family could be approximated by series of one-parameter families from Definition 4.1. We believe that our results could be appropriately generalized to that setting. This problem is left for further research.
\end{remark}

\section{Sign variation formulas}

As mentioned before, in this section we introduce \textit{sign variation formulas}, see Definitions 5.1 and 5.3. These formulas are useful analogues of sign sequences of univariate polynomials, generalized to the setting of bivariate polynomials, see Remark 5.4. Moreover, Theorem 5.5 shows exactly how sign variation formulas are applied in the verification of global, local and pointwise nonpositivity of one-parameter families. Finally, we prove in Theorem 5.7 (see also Proposition 5.6) that, at least in the context of sign variation formulas, local
and pointwise nonpositivity are equivalent conditions.

We stress that this section heavily uses the notation introduced in Section 3, especially for the monoid $\Sigma^{*}$ of words over the alphabet $\Sigma=\{+,-\}$ (but also for polynomials).

\begin{df} A \emph{sign variation formula of type $\sigma\in\Sigma^{*}$} is a formula $\varphi(t)$ (in one free variable $t$, see \cite{Rot}) recursively defined in the following way:
\begin{itemize}
  \item if $\sigma=+$ ($\sigma=-$, respectively), then $\varphi(t)$ is of the form $a(t)>0$ ($a(t)<0$, respectively), for some $a\in\RR[t]$,
  \item if $\sigma=\tau+$ ($\sigma=\tau-$, respectively), for some $\tau\in\Sigma^{*}$, then $\varphi(t)$ is of the form
      $\phi(t)\wedge a(t)>0$ ($\phi(t)\wedge a(t)<0$, respectively), for some sign variation formula $\phi(t)$ of type
      $\tau$ and $a\in\RR[t]$.
\end{itemize}
\end{df}
A \emph{sign variation formula} is a sign variation formula $\varphi(t)$ of some type $\sigma\in\Sigma^{*}$. If we want to emphasize the type of a sign variation formula, then we denote $\varphi(t)$ as $\varphi(t)_{\sigma}$, if the type of $\varphi(t)$ equals $\sigma$.

Observe that a sign variation formula $\varphi(t)$ of type $\sigma\in\Sigma^{*}$ is in fact a formula of the form
$$\bigwedge_{i=0}^{k}a_{k-i}(t)\Delta_{k-i}0$$ where $a_{i}(t)\in\RR[t]$, $\Delta_{i}\in\{<,>\}$, and $\Delta_{i}$ equals $>$ if and
only if $\sigma(i)=+$, for any $i=0,\hdots,k$. (it follows that $\Delta_{i}$ equals $<$ if and only if $\sigma(i)=-$) We call any $a_{i}(t)>0$ a
\emph{positive component} of $\varphi(t)$ and $a_{j}(t)<0$ a \emph{negative component} of $\varphi(t)$. The sequence of polynomials
$(a_{k},a_{k-1},\hdots,a_{0})$ is called the \emph{domain} of $\varphi(t)$.

\begin{ex} Consider the formula $\varphi(t)$ of the form $$a_{5}(t)>0\;\wedge\; a_{4}(t)>0\;\wedge\; a_{3}(t)<0\;\wedge\; a_{2}(t)>0\;\wedge\; a_{1}(t)<0\;\wedge\; a_{0}(t)>0$$ where $a_{5},\hdots,a_{0}\in\RR[t]$ are some polynomials. Then $\varphi(t)$ is a sign variation formula of the type $\sigma=++-+-+$, i.e. we have $\varphi(t)=\varphi(t)_{\sigma}$. Moreover, formulas $a_{5}(t)>0$, $a_{4}(t)>0$, $a_{2}(t)>0$, $a_{0}(t)>0$ are positive components of $\varphi(t)_{\sigma}$ whereas $a_{3}(t)<0$ and $a_1(t)<0$ are its only two negative components. The domain of $\varphi(t)_{\sigma}$ is the sequence $(a_{5},a_{4},\hdots,a_{0})$.

\end{ex} It is easy to see that any sign variation formula is equivalent with a sign variation formula of the type $+^{n}$. Indeed, if $\varphi(t)$ is a sign variation formula, then we define the \emph{normal form} of $\varphi(t)$ as the formula obtained from $\varphi(t)$ by replacing any negative component $a(t)<0$ of $\varphi(t)$ by the formula $(-a(t))>0$. Observe that the normal form of the formula from Example 5.2 looks as follows: \[a_{5}(t)>0\;\wedge\; a_{4}(t)>0\;\wedge\; -a_{3}(t)>0\;\wedge\; a_{2}(t)>0\;\wedge\; -a_{1}(t)>0\;\wedge\; a_{0}(t)>0.\]
It is obvious that a sign variation formula is equivalent with its normal form. In Section 6 we shall assume that any sign variation formula is written in the normal form.

\begin{df} Assume that $f\in\RR[t][x]$ is a bivariate polynomial such that $$f(t,x)=a_{n_{k}}(t)x^{n_{k}}+a_{n_{k-1}}(t)x^{n_{k-1}}+\hdots+a_{n_{0}}(t)x^{n_{0}},$$ for some strictly decreasing sequence $n_{k}>n_{k-1}>\hdots>n_{0}$ of natural numbers and non-zero polynomials $a_{n_{i}}\in\RR[t]$.
Let $\sigma$ be a word from $\Sigma^*$ of length $k+1$.
A \emph{sign variation formula of type $\sigma$ for $f$} is a sign variation formula $\varphi(t)_{\sigma}$ whose domain is the coefficient sequence $(a_{n_{k}},a_{n_{k-1}},\hdots,a_{n_{0}})$ of the polynomial $f$. We denote this formula by $\varphi(t)_{\sigma}^{f}$.
\end{df}

The crucial property of a sign variation formula of type $\sigma$ for a bivariate polynomial $f\in\RR[t][x]$ lies in the following straightforward
observation.

\begin{remark} Assume that $\varphi(t)$ is a sign variation formula of type $\sigma$ for a polynomial $f\in\RR[t][x]$, i.e. $\varphi(t)=\varphi(t)_{\sigma}^{f}$. Fix a real number $t_0\in\RR$. Then $\varphi(t_0)_{\sigma}^{f}$ holds if and only if the sign sequence of the polynomial $f(t_0)\in\RR[x]$ equals $\sigma$, i.e. $s(f(t_0))=\sigma$. Hence in this situation we have two equalities:
$$s(f(t_0))=\sigma\textnormal{ and thus }\lambda(s(f(t_0)))=\lambda(\sigma).$$
\end{remark}

The following result is a version of Corollary 4.4, written in terms of sign variation formulas. This fact shows that sign variation formulas
provide a convenient language for verification of nonpositivity of one-parameter families.

\begin{thm} Assume that $\BPhi=\{\Phi_{t}\colon\MM_{n}(\CCC)\ra\MM_{n}(\CCC)\}_{t\in\RR}$ is a one-parameter family of hermiticity-preserving superoperators and let $f=\chi_{\BPhi}^{-}\in\RR[t][x]$. Assume that $\varphi(t)_{\sigma}^{f}$ is a sign variation formula such that $\lambda(\sigma)\geq n^{2}-2n+2$. Then the following assertions hold.
\begin{enumerate}[\rm(1)]
  \item If the formula $\forall_{t\in\RR\snull}\varphi(t)_{\sigma}^{f}$ holds, then $\BPhi$ is globally nonpositive.
  \item If there is an open interval $(u,v)\subseteq\RR$ such that the formula $\forall_{t\in(u,v)\snull}\varphi(t)_{\sigma}^{f}$ holds, then $\BPhi$ is locally nonpositive.
  \item If the formula $\exists_{t\in\RR\snull}\varphi(t)_{\sigma}^{f}$ holds, then $\BPhi$ is pointwise nonpositive.
\end{enumerate}
\end{thm}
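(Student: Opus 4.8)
The plan is to reduce all three assertions to Corollary 4.4 by means of Remark 5.4. The heart of the matter is the following observation: for a fixed $t_0\in\RR$, if the sign variation formula $\varphi(t_0)_{\sigma}^{f}$ holds, where $f=\chi_{\BPhi}^{-}$, then the sign sequence of the univariate polynomial $f(t_0)\in\RR[x]$ is exactly $\sigma$, and consequently $\lambda(s(f(t_0)))=\lambda(\sigma)$.

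First I would recall, from the discussion following Proposition 4.3, that $\chi_{\BPhi}^{-}(t_0)=\chi_{t_0}^{-}=\chi_{\Phi_{t_0}}^{-}$ for every $t_0\in\RR$, so that $f(t_0)=\chi_{t_0}^{-}$ is a genuine real univariate polynomial. Writing $f(t,x)=a_{n_k}(t)x^{n_k}+\dots+a_{n_0}(t)x^{n_0}$ with $n_k>\dots>n_0$ and nonzero $a_{n_i}\in\RR[t]$, the formula $\varphi(t)_{\sigma}^{f}$ is, by Definition 5.3, the conjunction $\bigwedge_{i=0}^{k}a_{n_i}(t)\Delta_i 0$ with $\Delta_i$ equal to $>$ precisely when $\sigma(i)=+$. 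Hence, when $\varphi(t_0)_{\sigma}^{f}$ holds, every $a_{n_i}(t_0)$ is nonzero, so the coefficient sequence of $f(t_0)$ is precisely $(a_{n_k}(t_0),\dots,a_{n_0}(t_0))$ with no entry dropping out, and the sign of its $i$-th term is exactly $\sigma(i)$. This is exactly the content of Remark 5.4: $s(f(t_0))=\sigma$, and therefore $\lambda(s(\chi_{t_0}^{-}))=\lambda(s(f(t_0)))=\lambda(\sigma)$.

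With this identity available, the three parts follow immediately. For (1), if $\forall_{t\in\RR\snull}\varphi(t)_{\sigma}^{f}$ holds, then $\lambda(s(\chi_t^{-}))=\lambda(\sigma)\geq n^2-2n+2$ for every $t\in\RR$, and Corollary 4.4 (1) gives that $\BPhi$ is globally nonpositive. For (2), the same computation restricted to $t\in(u,v)$ shows $\lambda(s(\chi_t^{-}))\geq n^2-2n+2$ for all $t$ in that interval, so Corollary 4.4 (2) yields local nonpositivity. For (3), a single $t$ satisfying $\varphi(t)_{\sigma}^{f}$ produces one $t$ with $\lambda(s(\chi_t^{-}))\geq n^2-2n+2$, and Corollary 4.4 (3) gives pointwise nonpositivity.

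This argument is essentially bookkeeping, and I do not expect a genuine obstacle. The one point that needs a word of care is that the domain of $\varphi(t)_{\sigma}^{f}$ really is the full coefficient sequence of $f(t_0)$ at every $t_0$ where the formula holds --- that is, that no leading or intermediate coefficient polynomial $a_{n_i}$ secretly vanishes there, which would shorten the sign sequence of $f(t_0)$ and break the equality $s(f(t_0))=\sigma$. This is automatic because each conjunct of a sign variation formula is a strict inequality. It is also worth noting at the outset that $\lambda(\sigma)$ is independent of $t$, so the hypothesis $\lambda(\sigma)\geq n^2-2n+2$ transfers uniformly to all the relevant values of $t$.
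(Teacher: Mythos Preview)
Your proof is correct and follows exactly the same route as the paper: use Remark 5.4 to convert $\varphi(t_0)_{\sigma}^{f}$ into $\lambda(s(\chi_{t_0}^{-}))=\lambda(\sigma)\geq n^{2}-2n+2$, and then invoke the three parts of Corollary 4.4. Your version is simply more explicit about why no coefficient drops out at $t_0$, which the paper leaves implicit in its appeal to Remark 5.4.
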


\begin{proof} Assume that $t_{0}$ is a fixed real number. We conclude from the assumptions and Remark 5.4 that if $\varphi(t_0)_{\sigma}^{f}$ holds, then we get $$\lambda(s(\chi_{t_0}^{-}))=\lambda(s(f(t_0)))=\lambda(\sigma)\geq n^{2}-2n+2.$$ Hence all assertions are consequences of Corollary 4.4.\footnote{Note that $\lambda(s(f(t_0)))=\lambda(\sigma)$ does not imply $s(f(t_0))=\sigma$, so the condition $\lambda(s(f(t_0)))=\lambda(\sigma)\geq n^{2}-2n+2$ does not imply that $\varphi(t_0)_{\sigma}^{f}$ holds.}
\end{proof}

The fact that the characteristic polynomial $\chi_{\BPhi}$ of the one-parameter family $\BPhi$ is a bivariate real polynomial, proved in Proposition
4.3, is an implicit but crucial element of the above theorem. It should be emphasized that the idea of sign variation formulas arose from that very
source.

We say that a sign variation formula $\varphi(t)$ is \emph{satisfied} if and only if there is some $t_0\in\RR$ such that $\varphi(t_0)$ holds.
Equivalently, $\varphi(t)$ is satisfied if and only if the formula $\exists_{t\in\RR\snull}\varphi(t)$ holds.

\begin{prop} A sign variation formula $\varphi(t)$ is satisfied if and only if there is an open interval $(u,v)\subseteq\RR$, $u<v$, such that $\varphi(t)$ holds for any $t\in(u,v)$.
\end{prop}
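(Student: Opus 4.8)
The plan is to prove the nontrivial left-to-right implication; the converse is immediate, since any open interval $(u,v)$ with $u<v$ on which $\varphi(t)$ holds is nonempty, so it contains some $t_0$, and then $\varphi(t_0)$ witnesses that $\varphi(t)$ is satisfied.

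First I would replace $\varphi(t)$ by its normal form, which by the remarks following Definition 5.3 is equivalent to $\varphi(t)$. This reduces matters to the case where $\varphi(t)$ is a finite conjunction $\bigwedge_{i=0}^{k} b_i(t)>0$ with every $b_i\in\RR[t]$ (here each $b_i$ is either a polynomial $a_i$ coming from a positive component $a_i(t)>0$ of the original formula, or the polynomial $-a_i$ coming from a negative component $a_i(t)<0$). Working with the normal form is what lets the positive and negative components be treated uniformly, and it is really the only place where the syntactic structure of a sign variation formula enters.

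Next, assume $\varphi(t)$ is satisfied, so there is $t_0\in\RR$ with $b_i(t_0)>0$ for all $i=0,\dots,k$. Each $b_i$, being a polynomial function, is continuous on $\RR$, so the set $U_i:=\{t\in\RR : b_i(t)>0\}$ is open and contains $t_0$; hence there exist reals $u_i<t_0<v_i$ with $b_i(t)>0$ for every $t\in(u_i,v_i)$. Finally, put $u:=\max_{0\le i\le k}u_i$ and $v:=\min_{0\le i\le k}v_i$; these are well defined because the conjunction is \emph{finite}. Then $u<t_0<v$, so $(u,v)$ is a nonempty open interval, and for every $t\in(u,v)$ we have $t\in(u_i,v_i)$ for each $i$, hence $b_i(t)>0$ for all $i$, i.e.\ $\varphi(t)$ holds on all of $(u,v)$.

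There is no genuine obstacle here: the argument is essentially the combination of continuity of polynomial functions with the fact that a finite intersection of open subsets of $\RR$ is open. The only points requiring a little care are the preliminary reduction to the normal form (so that negative components are rewritten as $(-a(t))>0$) and the observation that the finiteness of the conjunction defining a sign variation formula is exactly what guarantees the max/min above range over finitely many values.
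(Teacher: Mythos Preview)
Your proof is correct and follows essentially the same approach as the paper: fix $t_0$ at which the formula holds, use continuity of each polynomial to obtain an open interval around $t_0$ on which the corresponding inequality persists, and intersect finitely many such intervals. The only cosmetic difference is that you first pass to the normal form so all components read $b_i(t)>0$, whereas the paper works directly with the components $a_i(t)\,\Delta_i\,0$; this is immaterial.
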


\begin{proof} We only show one implication since the converse is obvious. Assume that $\varphi(t)$ is satisfied and let $\varphi(t_0)$ hold, for some $t_0\in\RR$. It follows that $a_{i}(t_0)\Delta_{i} 0$ holds for any component $a_{i}(t)\Delta_{i} 0$ of $\varphi(t)$. Since $a_{i}$ is a real polynomial, we conclude that $a_{i}(t)\Delta_{i} 0$ holds for any $t$ belonging to some interval $(u_{i},v_{i})$ which contains $t_0$. Then the intersection $\bigcap_{i=0}^k(u_{i},v_{i})$ of these intervals is an non-empty interval (containing $t_0$) satisfying the required condition.
\end{proof}

We obtain the following refinement of Theorem 5.5 (3).

\begin{thm} Assume that $\BPhi=\{\Phi_{t}\colon\MM_{n}(\CCC)\ra\MM_{n}(\CCC)\}_{t\in\RR}$ is a one-parameter family of hermiticity-preserving superoperators and let $f=\chi_{\BPhi}^{-}\in\RR[t][x]$. Assume that $\varphi(t)_{\sigma}^{f}$ is a sign variation formula such that $\lambda(\sigma)\geq n^{2}-2n+2$. If the formula $\exists_{t\in\RR\snull}\varphi(t)_{\sigma}^{f}$ holds, then $\BPhi$ is locally nonpositive.
\end{thm}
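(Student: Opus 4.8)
The plan is to obtain this as an immediate consequence of Proposition 5.6 together with Theorem 5.5 (2), with no new computation required. First I would unwind the hypothesis: by definition, the assertion that $\exists_{t\in\RR\snull}\varphi(t)_{\sigma}^{f}$ holds is precisely the statement that the sign variation formula $\varphi(t)_{\sigma}^{f}$ is \emph{satisfied} in the sense introduced just before Proposition 5.6.

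Next I would invoke Proposition 5.6, which upgrades satisfiability at a single point to satisfiability on a whole open interval: there is an open interval $(u,v)\subseteq\RR$ with $u<v$ such that $\varphi(t)_{\sigma}^{f}$ holds for every $t\in(u,v)$. Equivalently, the formula $\forall_{t\in(u,v)\snull}\varphi(t)_{\sigma}^{f}$ holds.

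Finally I would apply Theorem 5.5 (2) to this interval. Its hypotheses — that $\varphi(t)_{\sigma}^{f}$ is a sign variation formula for $f=\chi_{\BPhi}^{-}$ with $\lambda(\sigma)\geq n^{2}-2n+2$ — are exactly the standing assumptions of the present theorem and are unaffected by restricting attention to $(u,v)$, since the type $\sigma$ does not change. Hence Theorem 5.5 (2) yields that $\BPhi$ is locally nonpositive, which is the desired conclusion.

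I do not anticipate any genuine obstacle: the whole content of the statement is carried by Proposition 5.6, whose proof in turn rests only on the elementary fact that finitely many polynomial sign conditions which are simultaneously satisfied at a point remain simultaneously satisfied on a sufficiently small neighbourhood of that point. The present theorem is thus essentially a repackaging which records the strengthening of Theorem 5.5 (3) made possible by that observation; the only point to be careful about is to state explicitly that passing from the single point $t_{0}$ to the interval $(u,v)$ preserves the type $\sigma$, so that the numerical hypothesis $\lambda(\sigma)\geq n^{2}-2n+2$ is still available when Theorem 5.5 (2) is applied.
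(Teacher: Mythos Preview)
Your proposal is correct and matches the paper's own argument: the paper also derives the result directly from Proposition 5.6 together with the reasoning of Theorem 5.5. There is nothing to add.
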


\begin{proof} The assertion follows from Proposition 5.6 and arguments similar as in the proof of Theorem 5.5.
\end{proof}

\section{Applying Sturm-Tarski theorem}

The purpose of this section is to give computable criteria for the sentences $\exists_{t\in\RR\snull}\varphi(t)$ and
$\forall_{t\in\RR\snull}\varphi(t)$ to hold, if $\varphi(t)$ is a sign variation formula. The criterion for $\forall_{t\in\RR\snull}\varphi(t)$
is given in Theorem 6.3, whereas for $\exists_{t\in\RR\snull}\varphi(t)$ in Corollary 6.6. Both theorems are presented together for convenience in
Theorem 6.7, the first main result of the paper. Our criteria are based on the \emph{Sturm-Tarski theorem} (also known as \emph{generalized Sturm theorem}), recalled in Theorem 6.2, and some of its corollaries, see especially Proposition 6.5.

Without loss of generality we assume that $\varphi(t)$ is always written in the normal form. For simplicity, we write $\forall_{t\snull}\varphi(t)$ and $\exists_{t\snull}\varphi(t)$ instead of $\forall_{t\in\RR\snull}\varphi(t)$ and $\exists_{t\in\RR\snull}\varphi(t)$.
We start with introducing some notation and terminology associated with the Sturm-Tarski theorem. We base on \cite{XY}.

\begin{df}
Assume that $n\in\NN$. A tuple $(h_{0},h_{1},\hdots,h_{n})\in\RR[x]^{n+1}$ of non-zero polynomials is a \textit{Sturm sequence} if and only if the following conditions hold:
\begin{enumerate}[\rm(1)]
	\item The polynomial $h_{n}$ does not have real roots.
	\item If $h_{i}(x)=0$, for some $x\in\RR$, then $h_{i-1}(x)h_{i+1}(x)<0$, for any $i=1,\dots,n-1$.
\end{enumerate}
\end{df}

We recall now the canonical recursive construction of a Sturm sequence associated with two non-zero polynomials $p,q\in\RR[x]$. Up to the sign, its elements are polynomials that occur as remainders in the Euclid's algorithm for determining the greatest common divisor of $p$ and $q$. Namely, let
$h_{0}=p$ and $h_{1}=q$. Assume that $n\geq 1$ and $h_{0},h_{1},\hdots,h_{n}$ are defined. If $h_{n}\!\mid \! h_{n-1}$, then
we have a Sturm sequence of the form $(h_{0},h_{1},\hdots,h_{n})$. Otherwise, we set $h_{n+1}=-\rem_{h_{n}}(h_{n-1})$ where $\rem_{h_{n}}(h_{n-1})$ is the remainder of division of polynomial $h_{n-1}$ by $h_{n}$. The sequence obtained in this construction is called \emph{the canonical Sturm sequence} for polynomials $p$ and $q$.

Note that if $(h_{0},h_{1},\hdots,h_{n})$ is the canonical Sturm sequence for $p$ and $q$, then it follows from the Euclid's algorithm that $h_{n}$ is, up to the sign, the greatest common divisor of $p$ and $q$ and thus $h_{n}\! \mid \! h_{i}$, for any $i=0,\hdots,n$. It is well known that the sequence $$\left(\frac{h_{0}}{h_{n}},\frac{h_{1}}{h_{n}},\hdots,\frac{h_{n}}{h_{n}}=1\right)$$ is also a Sturm sequence, see \cite{XY}. Other examples of Sturm sequences can be found in \cite{BPR}.

Let now $h\in\RR[x]$ and $u=-\infty$ or $u=\infty$. If $\lim_{x\ra u}h(x)=\infty$, then we set $\sigma_{u}(h)=+$. Otherwise, we set $\sigma_{u}(h)=-$. If $h_{1},\hdots,h_{s}\in\RR[x]$, then we set \[\sigma_{u}(h_{1},\hdots,h_{s})=(\sigma_{u}(h_{1}),\hdots,\sigma_{u}(h_{s}))\in\Sigma^{*}.\]

Assume that $p,q\in\RR[x]$ are non-zero polynomials and let $(h_{0},h_{1},\hdots,h_{n})$ be the canonical Sturm sequence for $p$ and $q$. We define $\nu(p,q)$ as the number \[\lambda(\sigma_{-\infty}(h_{0},\hdots,h_{n}))-\lambda(\sigma_{\infty}(h_{0},\hdots,h_{n})).\]
Moreover, we define $N(p,q)$ as the number $$|\{x\in\RR\mid p(x)=0\wedge q(x)>0\}|-|\{x\in\RR\mid p(x)=0\wedge q(x)<0\}|$$ where $|X|$ denotes the number of elements of a set $X$. Observe that if $q$ is a polynomial such that $q(x)>0$, for any $x\in\RR$, then $N(p,q)$ is the number of distinct real roots of polynomial $p$.

The following theorem is known as the \textit{generalized Sturm theorem} or the \textit{Sturm-Tarski theorem}. We denote by $\textbf{1}$ the constant polynomial $q(x)=1$.

\begin{thm} Assume that $p,q\in\RR[x]$ are non-zero polynomials. Then the following equality holds: $$\nu(p,p'q)=N(p,q)$$ and thus the number $N(p,q)$ can be computed. In particular, the number $N(p,\textnormal{\textbf{1}})$ of all distinct real roots of $p$ can be computed as $\nu(p,p')$. \epv
\end{thm}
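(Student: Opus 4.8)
The statement is the classical Sturm--Tarski (generalized Sturm) theorem, so the plan is to reconstruct the standard argument through a sign-variation counting function. First I would dispose of the computability claim, which is immediate: the polynomials $h_0,\dots,h_n$ of the canonical Sturm sequence for $p$ and $p'q$ are produced by the Euclid-type recursion described just before the theorem, hence by finitely many polynomial divisions; the words $\sigma_{-\infty}(h_0,\dots,h_n)$ and $\sigma_{\infty}(h_0,\dots,h_n)$ are read off from the leading coefficients and the parities of the degrees of the $h_i$; and $\nu(p,p'q)$ is then obtained by counting sign variations in two finite words and subtracting. So the whole content is the identity $\nu(p,p'q)=N(p,q)$, and the final ``in particular'' clause follows by taking $q=\textbf{1}$: then $q(x)>0$ for all $x$, so $N(p,\textbf{1})$ is exactly the number of distinct real roots of $p$, as observed just before the theorem. (One may assume $\deg p\ge 1$; for $\deg p=0$ the sequence construction, which needs $p'q\ne 0$, is not in force and the claim is vacuous.)

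Next I would reduce to the case of a \emph{genuine} Sturm sequence. Let $g=h_n$; by the Euclidean algorithm $g$ is, up to sign, $\gcd(p,p'q)$ and divides every $h_i$, so write $h_i=g\tilde h_i$. One checks that $(\tilde h_0,\dots,\tilde h_n)$ is the canonical Sturm sequence for $\tilde h_0=p/g$ and $\tilde h_1=p'q/g$, that $\tilde h_n=\pm 1$, and that it satisfies Definition 6.1: the alternation $\tilde h_{i-1}(c)\tilde h_{i+1}(c)<0$ at a zero $c$ of an intermediate $\tilde h_i$ follows from $\tilde h_{i-1}(c)=-\tilde h_{i+1}(c)$ together with the fact that two consecutive $\tilde h_i$ cannot both vanish (propagating the recursion in both directions would force $\tilde h_n$ to vanish). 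Since $\sigma_u(h_i)=\sigma_u(g)\,\sigma_u(\tilde h_i)$ for $u\in\{-\infty,+\infty\}$, and multiplying a finite sign-word by a single global sign leaves its number of sign variations unchanged, $\nu(p,p'q)$ equals $\lambda(\sigma_{-\infty}(\tilde h_0,\dots,\tilde h_n))-\lambda(\sigma_{\infty}(\tilde h_0,\dots,\tilde h_n))$. From here I would argue entirely with the Sturm sequence $(\tilde h_0,\dots,\tilde h_n)$.

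The heart of the matter is the jump analysis. For $a\in\RR$ not a root of any $\tilde h_i$, set $V(a)=\lambda(s(\tilde h_0(a),\dots,\tilde h_n(a)))$; as the $\tilde h_i$ are continuous, $V$ is a step function, constant between consecutive real roots of $\tilde h_0\tilde h_1\cdots\tilde h_n$, and $\nu(p,p'q)=\lim_{a\to-\infty}V(a)-\lim_{a\to+\infty}V(a)=\sum_c\bigl(V(c^-)-V(c^+)\bigr)$, the sum over those roots. Since $\tilde h_n$ is a nonzero constant it causes no jump; at a real root $c$ of an intermediate $\tilde h_i$ that is not a root of $\tilde h_0$, the alternation property forces the triple $(\tilde h_{i-1},\tilde h_i,\tilde h_{i+1})$ to show exactly one sign variation on each side of $c$, no other consecutive pair being affected, so there is no jump. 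Hence jumps occur only at real roots $c$ of $\tilde h_0=p/g$, a subset of the real roots of $p$; and at such a $c$, only the pair $(\tilde h_0,\tilde h_1)$ can jump (here $\tilde h_1(c)\ne 0$ because $\gcd(\tilde h_0,\tilde h_1)=\pm 1$), with its jump controlled by the sign of $\tilde h_0\tilde h_1=pp'q/g^2$, i.e.\ of $pp'q=\tfrac12(p^2)'q$, near $c$. If $q(c)\ne 0$: as $p^2$ is nonnegative with an isolated zero at $c$, $(p^2)'$ passes from $-$ to $+$ there, so $pp'q$ has sign $\sgn(x-c)\sgn(q(c))$ near $c$, and the pair contributes $\sgn(q(c))$ to $V(c^-)-V(c^+)$. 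If $q(c)=0$: a multiplicity count --- using that $x-c$ divides $p'$ to order exactly one less than it divides $p$ --- shows $x-c$ divides $g$ to the same order as it divides $p$, so $(p/g)(c)\ne 0$ and $c$ is not a root of $\tilde h_0$ at all. Summing the jumps over the real roots of $\tilde h_0$ --- which are precisely the real roots $c$ of $p$ with $q(c)\ne 0$ --- gives
\[
\nu(p,p'q)=\sum_{\substack{c\in\RR\\ p(c)=0,\ q(c)\ne 0}}\sgn(q(c))=N(p,q).
\]

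The step I expect to be the main obstacle is exactly this last sign bookkeeping at the roots of $p$: pinning down the behaviour of $pp'q$ --- equivalently, of the leading pair of the Sturm sequence --- at an arbitrary real root of $p$, uniformly in the multiplicity of that root and in whether it also annihilates $q$, and verifying cleanly that roots common to $p$ and $q$ are washed out by passing to the $\gcd$-reduced sequence. The remaining ingredients --- continuity of the $\tilde h_i$ and the telescoping of $V$, the alternation property of Definition 6.1, invariance of sign-variation counts under a global sign change, and reading $\sigma_{\pm\infty}$ off the leading data --- are routine.
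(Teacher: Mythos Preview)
The paper does not actually prove this theorem: it is stated as a known result (note the end-of-proof box immediately after the statement) and the reader is referred to \cite{XY} and \cite{BPR} for details and more general versions. So there is no ``paper's own proof'' to compare against.

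Your reconstruction is a correct and standard proof of the Sturm--Tarski theorem. The reduction to the $\gcd$-reduced sequence $(\tilde h_0,\dots,\tilde h_n)$ is clean, the verification that this is a Sturm sequence in the sense of Definition~6.1 is right, and the jump analysis is carried out carefully. In particular, your handling of the two delicate points --- the observation that $pp'=\tfrac12(p^2)'$ gives the sign pattern $\sgn(x-c)$ at any real root $c$ of $p$ regardless of multiplicity, and the multiplicity count showing that roots common to $p$ and $q$ drop out of $\tilde h_0=p/g$ --- is correct. One cosmetic point: writing $\sigma_u(h_i)=\sigma_u(g)\,\sigma_u(\tilde h_i)$ treats elements of $\{+,-\}$ multiplicatively; the intent is clear, but you might phrase it as ``the sign of $h_i$ at $u$ is the product of the signs of $g$ and $\tilde h_i$ at $u$'' to avoid notational friction with the paper's $\Sigma^*$ conventions.
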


We refer the reader to \cite{XY} and \cite{BPR} for more general versions of the above theorem. In this section we apply only the above special case.

In the sequel we assume that a sign variation formula is given in its normal form. This means that we are interested in determining the validity of
the sentences $\exists_{t\snull}\varphi(t)$ and $\forall_{t\snull}\varphi(t)$ where $\varphi(t)=a_{k}(t)>0\wedge\hdots\wedge a_{0}(t)>0$,
for some $a_{i}(t)\in\RR[t]$. For the case of the sentence $\forall_{t\snull}\varphi(t)$ this is given
in the following theorem.

\begin{thm} Assume that $\varphi(t)=a_{k}(t)>0\wedge\hdots\wedge a_{0}(t)>0$ is a sign variation formula in its normal form. Then the sentence $\forall_{t\snull}\varphi(t)$ holds if and only if the following condition is satisfied: $$\bigwedge_{i=0}^{k}N(a_{i}(t),\textnormal{\textbf{1}})=0\wedge a_{i}(0)>0.$$
\end{thm}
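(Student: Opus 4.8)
The plan is to show that $\forall_{t}\,\varphi(t)$ holds precisely when each polynomial $a_i$ is strictly positive on all of $\RR$, and then to translate the latter into the stated computable condition. For the first step I would argue as follows. If $\forall_{t}\,\varphi(t)$ holds, then in particular $a_i(t)>0$ for every $t\in\RR$ and every $i=0,\dots,k$, so each $a_i$ is a real polynomial with no real root that additionally takes a positive value (e.g.\ at $t=0$). Conversely, if every $a_i$ is strictly positive on $\RR$, then for each fixed $t_0$ all the conjuncts $a_i(t_0)>0$ hold, hence $\varphi(t_0)$ holds, hence $\forall_{t}\,\varphi(t)$. Thus $\forall_{t}\,\varphi(t)$ is equivalent to the assertion ``$a_i(t)>0$ for all $t\in\RR$ and all $i$''.

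Next I would reformulate ``$a_i$ is strictly positive on $\RR$'' in terms of root counting. A real polynomial $a_i$ is strictly positive on all of $\RR$ if and only if it has no real root and assumes some positive value; since a continuous real function with no zero has constant sign, it suffices to test the sign at a single convenient point, and $t=0$ is the natural choice. So ``$a_i>0$ on $\RR$'' is equivalent to ``$a_i$ has no real root, and $a_i(0)>0$''. The number of \emph{distinct} real roots of $a_i$ is exactly $N(a_i(t),\textnormal{\textbf{1}})$ by Theorem 6.2 (the Sturm-Tarski theorem applied with $q=\textnormal{\textbf{1}}$, so that $N(a_i,\textnormal{\textbf{1}})=\nu(a_i,a_i')$), and $a_i$ has no real root at all if and only if it has no distinct real root, i.e.\ $N(a_i(t),\textnormal{\textbf{1}})=0$. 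Combining this over all $i$ gives exactly the condition $\bigwedge_{i=0}^{k}\bigl(N(a_i(t),\textnormal{\textbf{1}})=0\wedge a_i(0)>0\bigr)$, which proves the equivalence.

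One point that needs care — and which I expect to be the only genuine subtlety — is the degenerate possibility that some $a_i$ is a nonzero constant: then the canonical Sturm sequence for the pair $(a_i,a_i')$ is just $(a_i)$ (a nonzero constant has no real roots, so it already forms a Sturm sequence), and $\nu(a_i,a_i')=0$ correctly records that there are no real roots, while $a_i(0)>0$ records that the constant is positive; so the formula still behaves correctly. A second mild point is that $a_i'$ could be the zero polynomial (again the constant case), so strictly speaking one should phrase the root count as $N(a_i,\textnormal{\textbf{1}})$ rather than insist on forming $\nu(a_i,a_i')$; but Theorem 6.2 is stated for $N(p,q)$ and guarantees $N(a_i,\textnormal{\textbf{1}})$ is computable, so this is harmless. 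With these remarks the argument is complete, and the resulting condition is manifestly a finite combination of sign tests and Sturm-sequence computations, hence effective.
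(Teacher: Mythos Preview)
Your proposal is correct and follows essentially the same approach as the paper: reduce the universal statement to the componentwise condition ``$a_i(t)>0$ for all $t$'', then characterize this as ``$a_i$ has no real root and $a_i(0)>0$'' via the intermediate value (Darboux) property, and finally identify the root-free condition with $N(a_i,\textnormal{\textbf{1}})=0$. Your extra discussion of the constant-polynomial case is more careful than the paper (which simply invokes Sturm--Tarski without comment), but it is not needed for the logical argument since $N(a_i,\textnormal{\textbf{1}})$ is defined directly as a count and vanishes trivially when $a_i$ is a nonzero constant.
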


\begin{proof} The sentence $\forall_{t\snull}\varphi(t)$ holds if and only if the sentence $\forall_{t\snull}a_{i}(t)>0$ holds, for any $i=1,\dots,n$. Hence it is sufficient to show that if $a(t)\in\RR[t]$, then the sentence $\forall_{t\snull}a(t)>0$ holds if and only if the condition $N(a(t),\textnormal{\textbf{1}})=0\wedge a(0)>0$ is satisfied. The condition $N(a(t),\textnormal{\textbf{1}})=0$ means that the polynomial $a(t)$ has no real roots (by Sturm-Tarski theorem) and hence its graph lies entirely above or below the abscissa axis (which follows from the Darboux property of $\RR$). The additional condition that $a(0)>0$ ensures that the former holds. Thus we get $\forall_{t\snull}a(t)>0$ which finishes the proof of one implication. Since $\forall_{t\snull}a(t)>0$ yields $a$ has no roots and $a(0)>0$, the other implication holds as well. This finishes the proof.
\end{proof}

A computable condition for the second formula $\exists_{t\snull}\varphi(t)$ is related with counting the number of elements of the set $$\CS(f,a_{k},\hdots,a_{0}):=\{t\in\RR\mid f(t)=0\;\wedge\; a_{k}(t)>0\;\wedge\;\hdots\;\wedge\; a_{0}(t)>0\}$$ where $f,a_{i}\in\RR[t]$ are non-zero polynomials. We show in the proposition below how this can be done.

Recall that the \emph{sign function} $\sgn\colon\RR\ra\{-1,0,1\}$ is defined as $\sgn(r)=-1$, if $r<0$, $\sgn(r)=0$, if $r=0$ and $\sgn(r)=1$, if
$r>0$.

\begin{remark} It is easy to see that $\sgn(r_{1}r_{2})=\sgn(r_{1})\sgn(r_{2})$, for any $r_{1},r_{2}\in\RR$. Moreover, for any $f,g\in\RR[x]$ we have $$N(f,g)=\sum_{x\in\CZ(f)}\sgn(g(x))$$ where $\CZ(f)$ is the set of all distinct real roots of $f$.
\end{remark}

\begin{prop} Assume that $f,a_{k},\hdots,a_{0}\in\RR[t]$ are non-zero polynomials. Then we have the equality $$|\CS(f,a_{k},\hdots,a_{0})|=\frac{1}{2^{k+1}}\sum_{(p_{0},\hdots,p_{k})\in\{1,2\}^{k+1}}N(f,a_{k}^{p_{k}}\cdot a_{k-1}^{p_{k-1}}\cdot\hdots\cdot a_{0}^{p_{0}}).$$
\end{prop}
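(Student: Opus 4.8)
The plan is to express the indicator of the conjunction of strict inequalities $a_k(t)>0 \wedge \cdots \wedge a_0(t)>0$ at a real root $t$ of $f$ as an average of products of sign functions, and then sum over $\CZ(f)$ using Remark 6.4. First I would recall the elementary identity: for any nonzero real $r$ one has $\tfrac12\bigl(\sgn(r)+\sgn(r)^2\bigr)=\tfrac12(\sgn(r)+1)$, which equals $1$ if $r>0$ and $0$ if $r<0$. Since $\sgn(r^p)=\sgn(r)^p$ for $p\in\{1,2\}$, this can be rewritten as $\tfrac12\sum_{p\in\{1,2\}}\sgn(r^p)$; this is the key gadget that replaces a strict inequality by an average over exponents $1$ and $2$.

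Next I would apply this to each factor. For $t$ in the finite set $\CZ(f)$, assuming the $a_i$ do not vanish there (a point to handle — see below), the product $\prod_{i=0}^{k}\tfrac12\sum_{p_i\in\{1,2\}}\sgn(a_i(t)^{p_i})$ equals $1$ exactly when $a_i(t)>0$ for all $i$, and $0$ otherwise. Expanding the product of sums and using multiplicativity of $\sgn$ gives
\[
\mathbf{1}_{\{a_k(t)>0\wedge\cdots\wedge a_0(t)>0\}}=\frac{1}{2^{k+1}}\sum_{(p_0,\hdots,p_k)\in\{1,2\}^{k+1}}\sgn\bigl(a_k(t)^{p_k}\cdots a_0(t)^{p_0}\bigr).
\]
Summing this over all $t\in\CZ(f)$, the left side becomes exactly $|\CS(f,a_k,\hdots,a_0)|$, while by Remark 6.4 each term $\sum_{t\in\CZ(f)}\sgn(a_k(t)^{p_k}\cdots a_0(t)^{p_0})$ equals $N(f,a_k^{p_k}\cdots a_0^{p_0})$. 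Interchanging the two finite sums yields the claimed formula.

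The main obstacle is a genuine subtlety in applying Remark 6.4: the formula $N(f,g)=\sum_{x\in\CZ(f)}\sgn(g(x))$ silently allows $g(x)=0$ (such roots contribute $0$), but the gadget identity $\tfrac12\sum_{p\in\{1,2\}}\sgn(r^p)=\mathbf{1}_{\{r>0\}}$ fails at $r=0$, where it gives $0$ correctly only because $\sgn(0)=\sgn(0)^2=0$ — so in fact it does give $0$, matching the indicator $\mathbf{1}_{\{0>0\}}=0$. Thus the identity extends to $r=0$ as well, and no genuine case split is needed; I would verify this explicitly. A second, purely bookkeeping point is that $\CS(f,a_k,\hdots,a_0)$ is finite (it is contained in $\CZ(f)$), so the sum over it is legitimate, and the products $a_k^{p_k}\cdots a_0^{p_0}$ are nonzero polynomials, so each $N(f,a_k^{p_k}\cdots a_0^{p_0})$ is defined. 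Assembling these observations gives the proof; the only real content is the sign-function gadget and the linearity of $N(f,\cdot)$ expressed through Remark 6.4.
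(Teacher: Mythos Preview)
Your proposal is correct and follows essentially the same argument as the paper: the paper introduces the auxiliary function $l(r)=\tfrac12(r^2+r)$ and observes that $l(\sgn(a_i(t)))=\tfrac12(\sgn(a_i^2(t))+\sgn(a_i(t)))$ is the indicator of $a_i(t)>0$, then expands the product and applies Remark~6.4 exactly as you do. Your gadget $\tfrac12\sum_{p\in\{1,2\}}\sgn(r^p)$ is the same object written slightly differently, and your explicit check of the $r=0$ case is a small clarification the paper leaves implicit.
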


\begin{proof} Define $l(r)=\frac{1}{2}(r^{2}+r)$ and observe that $l(1)=1$ and $$l(r)=0\Lra r=0\vee r=-1.$$ Consequently, for any polynomial $h\in\RR[x]$ and a fixed $x_0\in \RR$ we have $h(x_0)>0$ if and only if $l(\sgn(h(x_0)))\neq 0$, and in this case $l(\sgn(h(x_0)))=1$. It follows that $$|\CS(f,a_{k},\hdots,a_{0})|=\sum_{t\in\CZ(f)}l(\sgn(a_{k}(t)))\cdot l(\sgn(a_{k-1}(t)))\cdot\hdots\cdot l(\sgn(a_{0}(t))).$$
Moreover, Remark 6.4 yields the following equalities: \[\sum_{t\in\CZ(f)}l(\sgn(a_{k}(t)))\cdot l(\sgn(a_{k-1}(t)))\cdot\hdots\cdot l(\sgn(a_{0}(t)))=\]
\[=\sum_{t\in\CZ(f)}\prod_{i=0}^{k}\frac{1}{2}(\sgn(a_{i}^{2}(t))+\sgn(a_{i}(t)))=\]
$$=\sum_{t\in\CZ(f)}\frac{1}{2^{k+1}}\left( \sum_{(p_{k},\hdots,p_{0})\in\{1,2\}^{k+1}}\sgn(a_{k}^{p_{k}}(t))\cdot\sgn(a_{k-1}^{p_{k-1}}(t))\cdot
\hdots\cdot\sgn(a_{0}^{p_{0}}(t))\right)=$$ $$=\frac{1}{2^{k+1}}\sum_{(p_{k},\dots,p_{0})\in\{1,2\}^{k+1}}\sum_{t\in\CZ(f)}
\sgn(a_{k}^{p_{k}}(t)\cdot a_{k-1}^{p_{k-1}}(t)\cdot\hdots\cdot a_{0}^{p_{0}}(t))=$$
$$=\frac{1}{2^{k+1}}\sum_{(p_{k},\hdots,p_{0})\in\{1,2\}^{k+1}}N(f,a_{k}^{p_{k}}\cdot a_{k-1}^{p_{k-1}}\cdot\hdots\cdot a_{0}^{p_{0}}).$$ This shows the claim.
\end{proof}

The above result generalizes \cite[Proposition 4.2]{PSJ1}, see also \cite[Section 3]{PSJ2}. Similarly as before, the proof applies some ideas from Section 5 of \cite{KB}.

We denote by $\ov{\RR}$ the set $\RR\cup\{-\infty,\infty\}$ and we assume that $-\infty<u<\infty$, for any $u\in\RR$. In the following fact we show that the above proposition can be applied to give a computable criterion for checking whether the sentence $\exists_{t\snull}\varphi(t)$ holds or not. This is a generalisation of \cite[Corollary 4.3]{PSJ1} (see also \cite[Section 3]{PSJ2}).

\begin{cor} Assume that $\varphi(t)=a_{k}(t)>0\wedge\hdots\wedge a_{0}(t)>0$ is a sign variation formula in its normal form.
\begin{enumerate}[\rm(1)]
  \item The following two conditions are computable:
  \begin{enumerate}[\rm(a)]
    \item $\lim_{t\ra\infty}a_{i}(t)=\infty$, for all $i=0,\dots,k$,
    \item $\lim_{t\ra-\infty}a_{i}(t)=\infty$, for all $i=0,\hdots,k$.
  \end{enumerate} If either of them is satisfied, then the formula $\exists_{t\snull}\varphi(t)$ holds.
  \item Assume that $\lim_{t\ra\infty}a_{l}(t)\neq\infty$ and $\lim_{t\ra-\infty}a_{m}(t)\neq\infty$, for some $l,m=0,\hdots,k$. Then the formula $\exists_{t\snull}\varphi(t)$ is equivalent with the formula $$\exists_{t\snull}a_{k}(t)>0\wedge\hdots\wedge a_{0}(t)>0\wedge(a_{k}\cdot\hdots\cdot a_{0})^{'}(t)=0.$$ Therefore these formulas are equivalent with the condition $$|\CS((a_{k}\cdot\hdots\cdot a_{0})^{'},a_{k},\hdots,a_{0})|\neq 0.$$
\end{enumerate}

\end{cor}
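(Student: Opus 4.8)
The plan is to analyze the sentence $\exists_{t\snull}\varphi(t)$ by a case distinction on the limiting behaviour of the polynomials $a_{i}$ at $\pm\infty$, reducing everything to the computability of the quantities $N(\cdot,\cdot)$ from the Sturm--Tarski theorem (Theorem 6.2) and to the counting formula for $|\CS(\cdot)|$ from Proposition 6.5.

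For part (1), I would first observe that each condition $\lim_{t\ra\pm\infty}a_{i}(t)=\infty$ is computable: it depends only on the degree and leading coefficient of $a_{i}$, which are read off in finitely many arithmetic operations (for $t\ra-\infty$ one multiplies the leading coefficient by $(-1)^{\deg a_{i}}$). So conditions (a) and (b) are each a finite conjunction of computable conditions, hence computable. Next, suppose (a) holds. Then for each $i$ there is $u_{i}\in\RR$ with $a_{i}(t)>0$ for all $t>u_{i}$; taking $t_{0}$ larger than $\max_{i}u_{i}$ gives $\varphi(t_{0})$, so $\exists_{t\snull}\varphi(t)$ holds. The argument for (b) is symmetric, using a sufficiently negative $t_{0}$.

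For part (2), the key idea is that if $a_{l}$ does not tend to $+\infty$ at $+\infty$ and $a_{m}$ does not tend to $+\infty$ at $-\infty$, then on any connected component of the open set $U=\{t\in\RR : a_{k}(t)>0\wedge\hdots\wedge a_{0}(t)>0\}$ the product $g:=a_{k}\cdot\hdots\cdot a_{0}$ cannot be monotone all the way out to $\pm\infty$: indeed $U$ is bounded (if $U$ contained an unbounded-above component, every $a_{i}$ would be eventually positive there, forcing $a_{l}(t)\ra+\infty$, a contradiction; similarly below), so $U$, if non-empty, has a bounded component $(u,v)$ on whose closure $g$ is continuous with $g(u)=g(v)=0$ (a boundary point of $U$ is a zero of some $a_{i}$, hence of $g$) while $g>0$ on $(u,v)$; by Rolle's theorem $g'$ has a zero in $(u,v)\subseteq U$. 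Conversely, if there is $t_{0}$ with $\varphi(t_{0})$ and $g'(t_{0})=0$ then in particular $\varphi(t_{0})$ holds. This proves the equivalence of $\exists_{t\snull}\varphi(t)$ with $\exists_{t\snull}\,a_{k}(t)>0\wedge\hdots\wedge a_{0}(t)>0\wedge g'(t)=0$, and the latter says precisely that the set $\CS(g',a_{k},\hdots,a_{0})$ is non-empty, i.e. $|\CS(g',a_{k},\hdots,a_{0})|\neq 0$. By Proposition 6.5 this cardinality is a finite sum of terms $N(g',a_{k}^{p_{k}}\cdots a_{0}^{p_{0}})$, each computable by Theorem 6.2, so the whole criterion is computable.

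The main obstacle I anticipate is the boundedness argument in part (2): one must be careful to show that the hypotheses on $a_{l}$ and $a_{m}$ really do force \emph{every} component of $U$ to be bounded, rather than merely ruling out $U=\RR$. A clean way is to argue by contrapositive --- if some component of $U$ is unbounded above, then $a_{l}$ is positive on an unbounded-above ray, and since a real polynomial eventually has constant sign agreeing with $\lim_{t\ra\infty}$, this forces $\lim_{t\ra\infty}a_{l}(t)>0$, hence $=+\infty$ since the limit of a polynomial at $\infty$ is $\pm\infty$ or a constant only if the polynomial is constant; the case of a nonzero constant positive $a_{l}$ is harmless and should be folded in by noting it would then lie in case (1)(a), or handled by a small side remark. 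The degenerate sub-cases ($U$ empty; some $a_{i}$ constant) and the precise statement that a boundary point of $U$ is a zero of $g$ need to be stated carefully but are routine.
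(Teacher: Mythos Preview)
Your argument is correct and follows the paper's route: for part (2), the paper likewise takes the maximal open interval about a witness $t_{0}$ on which every $a_{j}>0$ (written there as $I=\bigcap_{j}I_{j}$ with $I_{j}=(u_{j},v_{j})$), infers from the limit hypotheses that its endpoints are finite roots of some $a_{j}$ and hence of the product $a=a_{k}\cdots a_{0}$, and then applies Rolle's theorem to $a$ on that interval; part (1) is handled identically by inspecting degrees and leading coefficients. Your worry about a positive constant $a_{l}$ is well taken and is in fact glossed over in the paper's own proof (which simply asserts $v_{l}\neq\infty$ from $\lim_{t\to\infty}a_{l}(t)\neq\infty$); note, however, that your suggested patch via case (1)(a) does not work as stated, since a single positive constant among the $a_{i}$ does not force the remaining $a_{i}$ to tend to $+\infty$.
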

\begin{proof} Observe that conditions on limits from both statements complement each other.

$(1)$ It is easy to see that both conditions can be checked by looking at signs of leading coefficients and degrees of polynomials $a_{k},\hdots,a_{0}$. The latter implication is obvious.

$(2)$ Clearly it is sufficient to show that if there is $t_0\in\RR$ such that the condition $a_{k}(t_0)>0\wedge\hdots\wedge a_{0}(t_0)>0$ holds, then there is $t_1\in\RR$ such that $$(*)\quad a_{k}(t_1)>0\wedge\hdots\wedge a_{0}(t_1)>0\wedge(a_{k}\cdot\hdots\cdot a_{0})'(t_1)=0.$$ Hence assume the former and let $I_{j}=(u_{j},v_{j})\subseteq\ov{\RR}$ be the largest interval such that $t_0\in I_{j}$ and $a_{j}(t)>0$, for any $t\in I_{j}$ and $j=0,\hdots k$. Observe that if $u_{j}$ or $v_{j}$ is finite, then it is a root of $a_{j}$. Let $I=\bigcap_{j=0}^{k}I_j$ and note that $I$ is an interval containing $t_{0}$. Since $\lim_{t\ra\infty}a_{l}(t)\neq\infty$ and $\lim_{t\ra-\infty}a_{m}(t)\neq\infty$ we obtain that $v_{l}\neq\infty$ and $u_{m}\neq-\infty$, respectively. Hence $I=(u,v)$ where $u$ and $v$ are roots of some polynomials from the set $\{a_{k},\hdots,a_{0}\}$ and thus roots of the polynomial $a=a_{k}\cdot\hdots\cdot a_{0}$. We get $a(t)>0$, for any $t \in I$, because $a_{j}(t)>0$, for any $t\in I$, $j=0,\hdots,k$. Then Rolle's theorem implies that there is $t_1\in I$ with $a'(t_1)=0$. This finishes the proof of $(*)$, and hence $(2)$ follows by Proposition 6.5.
\end{proof}
We summarize our results in the following main theorem.

\begin{thm} Assume that $\varphi(t)=a_{k}(t)>0\wedge\hdots\wedge a_{0}(t)>0$ is a sign variation formula in its normal form. The following statements hold.
\begin{enumerate}[\rm(1)]
  \item If the computable condition $$\bigwedge_{i=0}^{k}N(a_{i}(t),\textnormal{\textbf{1}})=0\wedge a_{i}(0)>0$$ holds, then the formula $\forall_{t\snull}\varphi(t)$ holds.
  \item If one of the following computable conditions hold:
  \begin{enumerate}
    \item[\rm(2.1)] $\lim_{t\ra\infty}a_{i}(t)=\infty$, for all $i=0,\dots,k$,
    \item[\rm(2.2)] $\lim_{t\ra-\infty}a_{i}(t)=\infty$, for all $i=0,\dots,k$,
    \item[\rm(2.3)] $|\CS((a_{k}\cdot\hdots\cdot a_{0})^{'},a_{k},\hdots,a_{0})|\neq 0$, but $(2.1)$ and $(2.2)$ do not hold,
  \end{enumerate} then the formula $\exists_{t\snull}\varphi(t)$ holds.
\end{enumerate}
\end{thm}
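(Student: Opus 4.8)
The plan is to observe that Theorem 6.7 is a repackaging of Theorem 6.3 and Corollary 6.6, so the proof reduces to matching hypotheses and recording why each of the quoted conditions is computable.

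For statement $(1)$ I would simply invoke Theorem 6.3, which asserts that $\forall_{t\snull}\varphi(t)$ holds \emph{if and only if} the condition $\bigwedge_{i=0}^{k}N(a_{i}(t),\textnormal{\textbf{1}})=0\wedge a_{i}(0)>0$ is satisfied; the stated hypothesis is precisely the right-hand side of this equivalence, so the implication is immediate. That this condition is computable follows from the Sturm-Tarski theorem (Theorem 6.2): each $N(a_{i}(t),\textnormal{\textbf{1}})$ equals $\nu(a_{i},a_{i}')$, which is obtained by the finite, Euclid-type construction of the canonical Sturm sequence for $a_{i}$ and $a_{i}'$ followed by counting sign variations of the limit-sign tuples at $-\infty$ and $\infty$, while each $a_{i}(0)$ is a single arithmetic evaluation.

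For statement $(2)$ I would split into the three listed cases. If $(2.1)$ or $(2.2)$ holds, these are verbatim conditions $(a)$ and $(b)$ of Corollary 6.6 $(1)$, and that corollary states that either of them forces $\exists_{t\snull}\varphi(t)$ to hold. In the remaining case $(2.3)$ neither $(2.1)$ nor $(2.2)$ holds, which means there are indices $l,m\in\{0,\hdots,k\}$ with $\lim_{t\ra\infty}a_{l}(t)\neq\infty$ and $\lim_{t\ra-\infty}a_{m}(t)\neq\infty$; this is exactly the hypothesis of Corollary 6.6 $(2)$, which then yields that $\exists_{t\snull}\varphi(t)$ is equivalent to $|\CS((a_{k}\cdot\hdots\cdot a_{0})^{'},a_{k},\hdots,a_{0})|\neq 0$, and this inequality is assumed in $(2.3)$. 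Computability of the three conditions is read off as follows: $(2.1)$ and $(2.2)$ depend only on the signs of the leading coefficients and the parities/degrees of $a_{k},\hdots,a_{0}$; and $|\CS((a_{k}\cdot\hdots\cdot a_{0})^{'},a_{k},\hdots,a_{0})|$ is computable by Proposition 6.5, which writes it as a finite sum of integers $N(f,g)$, each computable via Theorem 6.2.

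There is essentially no genuine obstacle in this proof; the one point that deserves a word of care in case $(2.3)$ is the dichotomy that the negations of $(2.1)$ and $(2.2)$ together give precisely the hypothesis ``$\lim_{t\ra\infty}a_{l}(t)\neq\infty$ and $\lim_{t\ra-\infty}a_{m}(t)\neq\infty$ for some $l,m$'' of Corollary 6.6 $(2)$ — this is exactly the observation, already noted at the start of the proof of Corollary 6.6, that the conditions on limits complement each other. I would make that matching explicit and otherwise let the result follow from the two earlier statements.
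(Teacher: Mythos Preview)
Your proposal is correct and matches the paper's own proof, which simply cites Theorem 6.3 for part $(1)$ and Corollary 6.6 together with Proposition 6.5 for part $(2)$. Your version is more detailed --- you unpack the dichotomy in case $(2.3)$ and spell out the computability claims --- but the underlying argument is identical.
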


\begin{proof} The assertion of $(1)$ follows from Theorem 6.3 and that of $(2)$ from Corollary 6.6 and Proposition 6.5.
\end{proof}

\section{The criterion and some examples}

The final section presents the computable criterion for nonpositivity of one-parameter families of hermiticity-preserving superoperators. The
criterion is given in Theorem 7.1 which is the second main result of the paper (after Theorem 6.7). It builds directly on results of Section 5 and
Section 6. Based on this theorem, we develop two algorithms (Algorithm 7.2 and Algorithm 7.3) for determining the nonpositivity of one-parameter families. We finish the section with an example of application of Algorithm 7.3.

\begin{thm} Assume that $\BPhi=\{\Phi_{t}\colon\MM_{n}(\CCC)\ra\MM_{n}(\CCC)\}_{t\in\RR}$ is a one-parameter family of hermiticity-preserving superoperators and let $f=\chi_{\BPhi}^{-}\in\RR[t][x]$. Assume that $\varphi(t)_{\sigma}^{f}$ is a sign variation formula such that $\lambda(\sigma)\geq n^{2}-2n+2$ and let $\varphi(t)=a_{k}(t)>0\wedge\hdots\wedge a_{0}(t)>0$ be its normal form. The following assertions hold.
\begin{enumerate}[\rm(1)]
  \item If the computable condition $$\bigwedge_{i=0}^{k}N(a_{i}(t),\textnormal{\textbf{1}})=0\wedge a_{i}(0)>0$$ holds, then the formula $\BPhi$ is globally nonpositive.
  \item If one of the following computable conditions hold:
  \begin{enumerate}
    \item[\rm(2.1)] $\lim_{t\ra\infty}a_{i}(t)=\infty$, for all $i=0,\dots,k$,
    \item[\rm(2.2)] $\lim_{t\ra-\infty}a_{i}(t)=\infty$, for all $i=0,\dots,k$,
    \item[\rm(2.3)] $|\CS((a_{k}\cdot\hdots\cdot a_{0})',a_{k},\hdots,a_{0})|\neq 0$, but $(2.1)$ and $(2.2)$ do not hold,
  \end{enumerate} then $\BPhi$ is pointwise nonpositive and locally nonpositive.
\end{enumerate}
\end{thm}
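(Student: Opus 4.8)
The plan is to deduce Theorem 7.1 directly by chaining together the two previous main results, so that essentially no new work is required. The key observation is that Theorem 7.1 is nothing but the combination of Theorem 6.7 (which tells us exactly when the sentences $\forall_{t\snull}\varphi(t)$ and $\exists_{t\snull}\varphi(t)$ hold, for a sign variation formula $\varphi(t)$ in normal form) with Theorem 5.5 and Theorem 5.7 (which translate the truth of those sentences, for $\varphi(t)=\varphi(t)_\sigma^f$ with $f=\chi_{\BPhi}^-$ and $\lambda(\sigma)\geq n^2-2n+2$, into the global, local, and pointwise nonpositivity of $\BPhi$). Since the hypotheses of Theorem 7.1 are precisely the hypotheses needed to invoke all of these results simultaneously, the proof is a short bookkeeping argument.

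First I would handle $(1)$. Assume the computable condition $\bigwedge_{i=0}^{k}N(a_i(t),\textbf{1})=0\wedge a_i(0)>0$ holds. By Theorem 6.7 $(1)$ (equivalently Theorem 6.3), this implies that the sentence $\forall_{t\snull}\varphi(t)$ holds, i.e. $\forall_{t\in\RR\snull}\varphi(t)_\sigma^f$ holds. Since $f=\chi_{\BPhi}^-$ and $\lambda(\sigma)\geq n^2-2n+2$ by assumption, Theorem 5.5 $(1)$ applies and yields that $\BPhi$ is globally nonpositive, as claimed.

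Next I would handle $(2)$. Suppose one of the conditions $(2.1)$, $(2.2)$, $(2.3)$ holds. By Theorem 6.7 $(2)$ (equivalently Corollary 6.6 together with Proposition 6.5), any one of these implies that the sentence $\exists_{t\snull}\varphi(t)$ holds, that is, $\exists_{t\in\RR\snull}\varphi(t)_\sigma^f$ holds. Again using $f=\chi_{\BPhi}^-$ and $\lambda(\sigma)\geq n^2-2n+2$, Theorem 5.5 $(3)$ gives that $\BPhi$ is pointwise nonpositive, while Theorem 5.7 (the refinement obtained via Proposition 5.6) gives that $\BPhi$ is moreover locally nonpositive. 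This establishes $(2)$ and completes the proof.

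There is no genuine obstacle here: the entire content has been front-loaded into Sections 5 and 6, and Theorem 7.1 is a clean synthesis. The only point requiring a word of care is the passage from $\varphi(t)_\sigma^f$ to $\BPhi$, which relies on Remark 5.4 and the crucial fact (Proposition 4.3) that $\chi_{\BPhi}$ is a genuine bivariate real polynomial so that $\chi_{\BPhi}^-(t)=\chi_t^-$ has the coefficient sequence serving as the domain of $\varphi(t)_\sigma^f$; but all of this is already absorbed into the statements of Theorem 5.5 and Theorem 5.7, so nothing new must be verified.
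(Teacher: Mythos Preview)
Your proposal is correct and follows exactly the same route as the paper's own proof, which simply states that the assertions follow directly from Theorems 5.5, 5.7 and 6.7. You have merely unpacked this one-line citation into explicit steps, with no divergence in strategy or content.
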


\begin{proof} The assertions follow directly from Theorem 5.5, 5.7 and 6.7.
\end{proof}
For a one-parameter family  $\BPhi=\{\Phi_{t}\colon\MM_{n}(\CCC)\ra\MM_{n}(\CCC)\}_{t\in\RR}$ of hermiticity-preserving superoperators we
define $\CS\CV\CF(\BPhi)$ as the following set of sign variation formulas: $$\CS\CV\CF(\BPhi):=\{\varphi(t)_{\sigma}^{f}\mid
f=\chi^{-}_{\BPhi}\textnormal{ and }\lambda(\sigma)\geq n^{2}-2n+2\}.$$
The algorithm below checks whether a fixed sign variation formula $\varphi(t)\in\CS\CV\CF(\BPhi)$ gives any information about the nonpositivity of $\BPhi$. In the algorithm we assume that the characteristic
polynomial $\chi_{\BPhi}\in\RR[t][x]$ of $\BPhi$ is already calculated (and so is the polynomial $\chi^{-}_{\BPhi}$).

\begin{alg}\textnormal{ }\\
Input:
\begin{itemize}
  \item a one-parameter family $\BPhi$ of hermiticity-preserving superoperators,
  \item the characteristic polynomial $\chi_{\BPhi}\in\RR[t][x]$ of $\BPhi$,
  \item a sign variation formula $\varphi(t)\in\CS\CV\CF(\BPhi)$.
\end{itemize}
Output:
\begin{itemize}
  \item \textit{$\BPhi$ is globally nonpositive},
  \item \textit{$\BPhi$ is locally nonpositive and pointwise nonpositive},
  \item \textit{The formula $\varphi(t)$ does not determine nonpositivity of $\BPhi$}.
\end{itemize}
Steps:
\begin{enumerate}[\bf(S1)]
  \item Determine the normal form of the sign variation formula $\varphi(t)$. Assume that $$a_{k}(t)>0\wedge\hdots\wedge a_{0}(t)>0$$ is
      the normal form of $\varphi(t)$.
  \item Determine whether the condition $$\bigwedge_{i=0}^{k}N(a_{i}(t),\textnormal{\textbf{1}})=0\wedge a_{i}(0)>0$$ holds. If so, then finish with the output \[\textit{$\BPhi$ is globally nonpositive}.\] Otherwise go to Step (S3).
  \item Determine whether
  \begin{itemize}
    \item $\lim_{x\ra\infty}a_{i}(x)=\infty$, for any $i=0,\dots,k$, or
    \item $\lim_{x\ra-\infty}a_{i}(x)=\infty$, for any $i=0,\dots,k$.
  \end{itemize} If any of these conditions holds, then finish with the output \[\textit{$\BPhi$ is locally nonpositive and pointwise nonpositive}.\] Otherwise go to Step (S4).
  \item Determine whether the condition $$|\CS((a_{k}\cdot\hdots\cdot a_{0})',a_{k},\hdots,a_{0})|\neq 0$$ holds.
   If so, then finish with the output \[\textit{$\BPhi$ is locally nonpositive and pointwise nonpositive}.\] Otherwise go to Step (S5).
  \item Finish with the output \[\textit{The formula $\varphi(t)$ does not determine nonpositivity of $\BPhi$}.\]
\end{enumerate}
\end{alg}

\begin{proof} Correctness of the algorithm follows from Theorem 7.1.
\end{proof}

The second algorithm shows how to apply the previous one to any sign variation formula from the set $\CS\CV\CF(\BPhi)$. Generally speaking, the
algorithm shows the range of application of Theorems 6.7 and 7.1. More specifically, it answers the question to what extent Descartes rule of signs, Sturm-Tarski theorem and the language of sign variation formulas can be used to verify of the nonpositivity of one-parameter families.

\begin{alg} \textnormal{ }\\
Input:
\begin{itemize}
  \item \textit{ a one-parameter family $\BPhi$ of hermiticity-preserving superoperators.}
\end{itemize}
Output:
\begin{itemize}
  \item \textit{$\BPhi$ is globally nonpositive},
  \item \textit{$\BPhi$ is locally nonpositive and pointwise nonpositive},
  \item \textit{$\CS\CV\CF(\BPhi)$ does not determine nonpositivity of $\BPhi$}.
\end{itemize}
Steps:
 \begin{enumerate}[\bf(S1)]
    \item Calculate the characteristic polynomial $\chi_{\BPhi}\in\RR[t][x]$ of $\BPhi$ and determine the set $\CS\CV\CF(\BPhi)$.
    \item Set $\CS\CV\CF(\BPhi)_0:=\CS\CV\CF(\BPhi)$.
    \item Take a sign variation formula $\varphi(t)\in\CS\CV\CF(\BPhi)_0$.
    \item Apply Algorithm 7.2 to the formula $\varphi(t)$. If its output is different than \[\textit{The formula $\varphi(t)$ does not determine nonpositivity of $\BPhi$},\] then finish. Otherwise, go to Step (S5).
    \item Check if $\CS\CV\CF(\BPhi)_0\neq\emptyset$. If so, then set $$\CS\CV\CF(\BPhi)_0:=\CS\CV\CF(\BPhi)_0\setminus\{\varphi(t)\}$$ and go back to Step (S3). Otherwise finish with the output \[\textit{$\CS\CV\CF(\BPhi)$ does not determine nonpositivity of $\BPhi$.}\]
  \end{enumerate}
\end{alg}
\begin{proof} Correctness of the algorithm follows from Theorem 7.1.
\end{proof}

The section is finished with an example of application of Algorithm 7.2. We purposely avoided long and tedious calculations which are not essential in a paper of theoretical nature. In particular, we assume that one-parameter matrices occurring in the considered one-parameter family are $2\times 2$ real matrices. This makes the calculations easy to comprehend. Indeed, our humble goal is only to give a glimpse of the computational methods presented in this section. Observe however that both algorithms are straightforward to implement in any high level computer algebra system.

\begin{ex} Consider a one-parameter family $\BPhi=\{\Phi_{t}:\MM_{2}(\RR)\ra\MM_{2}(\RR)\}_{t\in\RR}$ such that
\[
\Phi_{t}\left(X\right)=\sum_{k=1}^{3}\alpha_{k}A_{k}\left(t\right)XA^{*}_{k}\left(t\right)
\] where $\alpha_{1}=\alpha_{2}=\alpha_{3}=-1$ and
\begin{align*}
A_{1}\left(t\right) & =\left[\begin{array}{cc}
0 & -t\\
0 & 1
\end{array}\right],\;A_{2}\left(t\right)=\left[\begin{array}{cc}
1 & 0\\
-\!1\!-\!t & -t
\end{array}\right],
A_{3}\left(t\right) =\left[\begin{array}{cc}
0 & 0\\
1 & 1\!-\!t
\end{array}\right].
\end{align*} Then we obtain
\[
M\left(\mathcal{J}\left(\Phi_{t}\right)\right)=\left[ \begin {array}{cccc} -1&t+1&0&t\\\noalign{\smallskip}t+1& -t^2-2t-2 & 0 & -t^2-1\\
\noalign{\smallskip}0&0&-{t}^{2}&t\\
\noalign{\smallskip}t& -t^2-1 & t & -2{t}^{2}+2t-2 \end {array} \right],
\]
hence
\begin{align*}
\chi_{\BPhi}(x)=x^4\!+\!(4t^2\!+\!5)x^3\!
+\!(4t^4\!+\!2t^3\!+\!5t^2\!-\!2t\!+\!6)x^2\!+\!(t^6\!+\!2t^5\!\!-\!4t^3\!+\!3t^2\!+\! 1)x,
\end{align*}
and consequently
\begin{align*}
\chi_{\BPhi}^{-}\left(x\right)= x^4\!+\!(-\!4t^2-\!5)x^3\!+\!(4t^4\!+\!2t^3\!+\!5t^2\!-\!2t\!+\!6)x^2\!+\!(-t^6\!-\!2t^5\!\!+\!4t^3\!-\!3t^2\!-\!1)x.
\end{align*}
Using the notation from Section 4, $f=\chi_{\BPhi}^{-}$ has the coefficient sequence of the form $\underline{a}=(a_4,a_3,a_2,a_1)$, where
\begin{align*}
a_4(t) &=1,\\
a_3(t) &=-\!4t^2-\!5,\\
a_2(t) &=4t^4\!+\!2t^3\!+\!5t^2\!-\!2t\!+\!6,\\
a_1(t) &=-t^6\!-\!2t^5\!+\!4t^3\!-\!3t^2\!-\!1.
\end{align*}
Note that a sign variation formula $\varphi(t)=\varphi(t)_{\sigma}^f$ which is satisfied, must have a positive component $a_4(t)>0$ and a negative component $a_3(t)<0$. If in addition $\lambda\left(\sigma\right)>(n-1)^{2}=1$, then $\varphi(t)=\varphi(t)_{\sigma}^f$
is one of the following:
\begin{align*}
\varphi_{1}(t) & =1\!>\!0\;\wedge\; a_{3}(t)\!<\!0\;\wedge\; a_{2}(t)\!>\!0\;\wedge\; a_{1}(t)\!<\!0,\\
\varphi_{2}(t) & =1\!>\!0\;\wedge\; a_{3}(t)\!<\!0\;\wedge\; a_{2}(t)\!<\!0\;\wedge\; a_{1}(t)\!>\!0,\\
\varphi_{3}(t) & =1\!>\!0\;\wedge\; a_{3}(t)\!<\!0\;\wedge\; a_{2}(t)\!>\!0\;\wedge\; a_{1}(t)\!>\!0.
\end{align*}
We apply Algorithm 7.2 for $\varphi_{1}(t)$.
\begin{description}
\item [(S1)] The normal form of $\varphi_{1}(t)$ is as follows:
\[
1\!>\!0\;\wedge\; -a_{3}(t)\!>\!0\;\wedge\; a_{2}(t)\!>\!0\;\wedge\; -a_{1}(t)\!>\!0
\]
\item [(S2)] Since $-a_3(t)>0$, for any $t\in\RR$, it is enough to determine whether the condition $$(N(a_{2}(t),\textnormal{\textbf{1}})=0\wedge a_{2}(0)>0)\;\wedge\;(N(-a_{1}(t),\textnormal{\textbf{1}})=0\wedge -a_{1}(0)>0)$$ holds. We get the following calculations:
\begin{itemize}
\item The case of $N(a_{2}(t),\textnormal{\textbf{1}})=0\wedge a_{2}(0)>0$:
\begin{itemize}
\item Obviously $a_{2}\left(0\right)>0$. Set $h_{0}(t)=a_{2}(t)$, $h_{1}(t)=h_{0}'(t)$ and construct the Sturm sequence:
\begin{align*}
h_{0}(t) & =4t^4\!+\!2t^3\!+\!5t^2\!-\!2t\!+\!6\\
h_{1}(t) & =16t^{3}+6t^2+10t-2\\
h_{2}(t) & =-\frac{37}{16}t^{2}+\frac{29}{16}t-\frac{97}{16}\\
h_{3}(t) & =\frac{23840}{1369}t+\frac{69280}{1369}\\
h_{4}(t) & =\frac{10961583}{355216}.
\end{align*}
\item Since
\begin{align*}
\lambda\left(\sigma_{-}\left(h_{0},\ldots,h_{4}\right)\right) & =\lambda\left(+---+\right)=2\\
\lambda\left(\sigma_{+}\left(h_{0},\ldots,h_{4}\right)\right) & =\lambda\left(++-++\right)=2,
\end{align*}
we have $N\left(a_{2},\textnormal{\textbf{1}}\right)=0$. Thus $N(a_{2}(t),\textnormal{\textbf{1}})=0\wedge a_{2}(0)>0$ holds.
\end{itemize}
\item The case of $N(-a_{1}(t),\textnormal{\textbf{1}})=0\wedge -a_{1}(0)>0$:
\begin{itemize}
\item Obviously $-a_{1}\left(0\right)>0$. Set $h_{0}(t)=-a_{1}(t)$, $h_{1}(t)=h_{0}'(t)$ and construct the Sturm sequence:
\begin{align*}
h_{0}(t) & =t^6\!+\!2t^5\!-\!4t^3\!+\!3t^2\!+\!1\\
h_{1}(t) & =6t^{5}+10t^{4}-12t^2+6t\\
h_{2}(t) & =\frac{5}{9}t^4+2t^3- \frac{8}{3}t^2+\frac{1}{3}t-1\\
h_{3}(t)& =-\frac{1764}{25}t^3+\frac{1782}{25}t^2-\frac{594}{25}t+\frac{522}{25}\\
h_{4}(t) & =\frac{7675}{28812}t^2+\frac{31525}{86436}t+\frac{2325}{9604}\\
h_{5}(t) & = \frac{446509168}{2356225}t+\frac{62061048}{471245}\\
h_{6}(t) & = -\frac{86947058725}{741396579952}
\end{align*}
\item Since
\begin{align*}
\lambda\left(\sigma_{-}\left(h_{0},\ldots,h_{4}\right)\right) & =\lambda\left(+-+++--\right)=3\\
\lambda\left(\sigma_{+}\left(h_{0},\ldots,h_{4}\right)\right) & =\lambda\left(+++-++-\right)=3,
\end{align*}
we have $N\left(-a_{1},\textnormal{\textbf{1}}\right)=0$. Thus $N(-a_{1}(t),\textnormal{\textbf{1}})=0\wedge -a_{1}(0)>0$ holds.
\end{itemize}
\end{itemize}
\end{description}
Therefore Algorithm 7.2 outputs \textit{$\BPhi$ is globally nonpositive}.
\end{ex}

\begin{remark} Observe that in the example above we have \[
\Phi_{t}\left( X_0\right)=\left[\begin{array}{cc}
-2t^2-1 & 4t+1\\
4t+1 & -7t^{2}+2t-8
\end{array}\right]
\] where \[
X_0=\left[\begin{array}{cc}
1 & 1\\
1 & 2
\end{array}\right]
\] is a semipositive matrix. It can be shown that \[
\frac{1}{2}\left(-9t^{2}+2t-9-\sqrt{25t^{4}-20t^{3}+138t^{2}+4t+53}\right)
\] is a eigenvalue of $\Phi_{t}\left(X_{0}\right)$, for any $t\in\RR$. Since the number is always negative, this also proves that the one-parameter family $\BPhi$ is globally nonpositive. Recall however that these kind of arguments are possible only in the case of matrices of small sizes. Indeed, it follows from the Galois theory that we are not able to calculate directly the eigenvalues of $n\times n$ matrices with $n\geq 5$. Nevertheless, our algorithms can be applied in such situations as well.
\end{remark}

\end{document}